\documentclass[journal]{IEEEtran}
\usepackage{cite}
\usepackage{amsmath,amssymb,amsfonts,amsthm}
\usepackage{algorithmic} 
\usepackage{graphicx}
\usepackage{tikz}
\usepackage{graphicx}
\usepackage{pgfplots}
\usepackage{mathrsfs}  
\usepackage{caption}
\usepackage{subcaption}
\usepackage{comment}
\usepackage{float}
\usetikzlibrary{spy,backgrounds}
\usepackage{lipsum}
\usepackage{siunitx}
\newtheorem{theorem}{Theorem}
\newtheorem{lemma}{Lemma}

\newtheorem{definition}{Definition}

\usepackage{tabularx}
\usepackage{booktabs}  
\usepackage{caption}
\theoremstyle{remark}
\newtheorem{remark}{Remark}

\usepackage{tikz}
\usepackage{caption}
\usepackage{float}
\usetikzlibrary{decorations.pathreplacing}

\newcommand\numeq[1]%
  {\stackrel{\scriptscriptstyle(\mkern-1.5mu#1\mkern-1.5mu)}{=}}
  \newcommand\numl[1]%
  {\stackrel{\scriptscriptstyle(\mkern-1.5mu#1\mkern-1.5mu)}{<}}
\newcommand\numleq[1]%
  {\stackrel{\scriptscriptstyle(\mkern-1.5mu#1\mkern-1.5mu)}{\leq}}
\newcommand\numgeq[1]%
  {\stackrel{\scriptscriptstyle(\mkern-1.5mu#1\mkern-1.5mu)}{\geq}}
\usepackage{mathtools}

\DeclarePairedDelimiterX\braket[2]{\langle}{\rangle}{#1 \delimsize\vert #2}

\usepackage{breqn}

\usepackage{xcolor}
\def\BibTeX{{\rm B\kern-.05em{\sc i\kern-.025em b}\kern-.08em
    T\kern-.1667em\lower.7ex\hbox{E}\kern-.125emX}}
    
    \usepackage{cuted}
\usepackage{lipsum}

\ifCLASSINFOpdf 
\else  
\fi

\begin{document}

\title{Finite-Length Empirical Comparison of Polar, PAC, and Invertible-Extractor Secrecy Codes over the Wiretap BSC}
\author{\IEEEauthorblockN{Jaswanthi Mandalapu\IEEEauthorrefmark{1}, Tyler Kann\IEEEauthorrefmark{2}, Andrew Thangaraj\IEEEauthorrefmark{1}, Matthieu R. Bloch\IEEEauthorrefmark{2}\\  ee19d700@smail.iitm.ac.in, kann@gatech.edu,  andrew@ee.iitm.ac.in, matthieu.bloch@ece.gatech.edu  \\}
\IEEEauthorblockA{ \IEEEauthorrefmark{1} Department of Electrical Engineering, Indian Institute of Technology Madras\\
\IEEEauthorrefmark{2} School of Electrical and Computer Engineering, Georgia Institute of Technology} 
}
\maketitle

\begin{abstract}
We compare three secrecy-coding schemes for the degraded wiretap binary symmetric channel (BSC) in the \emph{finite-blocklength} regime: (i) polar wiretap coset codes, (ii) PAC codes used as wiretap coset codes, and (iii) the invertible-extractor (IE) framework of Bellare--Tessaro.
Our comparison is empirical and uses a common \emph{semantic-secrecy} metric (distinguishing advantage). For polar coset codes, we compute Eve's polarized bit-channel capacities (via the Tal--Vardy construction) to obtain explicit finite-length upper bounds on mutual-information leakage, yielding strong secrecy bounds. For PAC coset codes, we prove that Eve's synthesized bit-channels are equivalent to those of polar codes (up to a permutation), so the same leakage bounds apply; we then convert these strong-secrecy bounds into semantic-secrecy guarantees for symmetric wiretap channels. For the IE scheme, we use the closed-form semantic-secrecy bounds given in the reference work.
Finally, we report finite-length results that jointly characterize (a) semantic-secrecy guarantees against Eve and (b) frame-error-rate performance at Bob, illustrating that PAC codes can significantly improve reliability without changing the secrecy bounds inherited from polar coding. Moreover, under the finite-length bounds considered in this work, polar/PAC secrecy codes provide tighter security guarantees than the invertible-extractor framework.
\end{abstract}

\begin{IEEEkeywords}
Wiretap BSC channel, finite blocklength, semantic secrecy, strong secrecy, polar codes, PAC codes, invertible extractors.
\end{IEEEkeywords}
\IEEEpeerreviewmaketitle

\section{Introduction}
Information-theoretic security over noisy channels is classically captured by Wyner's wiretap channel model \cite{wyner1975wire}. While asymptotic secrecy-capacity results are well established, practical deployments typically operate at moderate blocklengths where both reliability and secrecy must be assessed \emph{quantitatively} at finite $N$.

This paper has one goal: a \emph{finite-length, simulation-driven} comparison of three secrecy coding schemes for the degraded wiretap BSC:
\begin{enumerate}
    \item polar wiretap coset codes (Mahdavifar--Vardy) \cite{Hessam-Vardy},
    \item PAC codes used as wiretap coset codes (Arıkan) \cite{arikan2019sequential},
    \item invertible-extractor (IE) based schemes (Bellare--Tessaro) \cite{bellare2012polynomial,bellare2012cryptographic}.
\end{enumerate}

A key challenge is that these schemes are traditionally evaluated under different secrecy notions. Polar/PAC coset codes are typically analyzed via \emph{strong secrecy} (mutual-information leakage for uniformly distributed messages), whereas IE-based schemes are designed and analyzed directly under \emph{semantic secrecy} (distinguishing advantage, worst-case over message distributions). For binary-input symmetric wiretap channels (including the wiretap BSC), strong secrecy implies semantic secrecy, e.g.,
$\delta^{ds}\le \sqrt{2I(\mathbf{M};\mathbf{Z})}$ for coset coding \cite{bellare2012cryptographic}. This lets us put all three schemes on the same semantic-secrecy scale.

Our contributions are:
\begin{itemize}
    \item A unified finite-length evaluation pipeline that reports Bob FER and Eve semantic-secrecy advantage for all three schemes.
    \item A proof that, from Eve's perspective, PAC codes preserve the synthesized bit-channels of polar codes (up to a permutation). Consequently, the standard polar leakage bounds apply unchanged to PAC coset codes.
    \item Finite-length numerical comparisons that highlight the reliability gains of PAC codes and contrast them with the semantic-secrecy bounds of the IE framework.
    \item Under the finite-length bounds considered here, polar/PAC secrecy codes provide tighter secrecy guarantees than the IE framework in the sense that the resulting upper bounds on Eve's information leakage (and hence on semantic-secrecy advantage) are smaller.
\end{itemize}

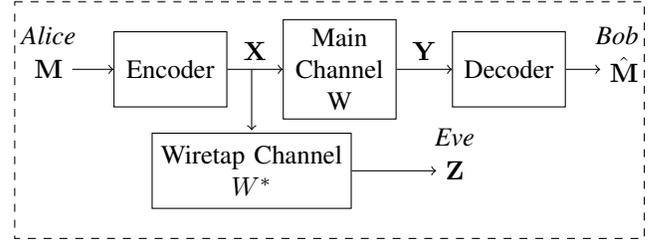
\begin{figure}[t!]
    \centering
    \begin{tikzpicture}[scale = 0.9]
  \node at (0,0) (u) {$\mathbf{M}$};
  \node[draw, minimum width=1.5cm, minimum height=1cm] at (1.8,0) (enc) {Encoder};
  \node[draw, minimum width=1.5cm, minimum height=1cm, align = center] at (4.3,0) (main) {Main \\ Channel \\ W};
  \node[draw, minimum width=1.5cm, minimum height=1cm] at (6.8,0) (dec) {Decoder};
  \node at (8.5,0) (uhat) {$\hat{\mathbf{M}}$};

  \node[draw, minimum width=1.5cm, minimum height=1cm, align=center] at (3,-1.5) (wt) {Wiretap Channel\\ $W^*$};
  \node at (6,-1.5) (z) {$\mathbf{Z}$};

  \node at (0, 0.5) {\textit{Alice}};
  \node at (8.4, 0.5) {\textit{Bob}};
  \node at (6, -1) {\textit{Eve}};

  \draw[->] (u) -- (enc);
  \draw[->] (enc) -- node[above] {$\mathbf{X}$} (main);
  \draw[->] (main) -- node[above] {$\mathbf{Y}$} (dec);
  \draw[->] (dec) -- (uhat);
  \draw[->] (3,0) -- (3,-0.9);
  \draw[->] (wt) -- (z);
  \draw[dashed] (-0.5,1) -- (8.8,1) -- (8.8,-2.5) -- (-0.5,-2.5) -- (-0.5,1);

\end{tikzpicture}
\caption{The Wiretap Channel Setting}
    \label{Fig-1:wiretap}
\end{figure}

\section{System Model and Comparison Framework}
\subsection{A Wiretap BSC Channel}

In this section, we briefly review the wiretap binary symmetric channel (BSC), denoted by $BSC(p_b,p_e),$ with ${0<p_b<p_e<1}$, as depicted in Fig.~\ref{Fig-1:wiretap}. The main channel between Alice (transmitter) and Bob (receiver) is a BSC with crossover probability $p_b$, represented by the triple $\langle\mathcal{X},\mathcal{Y},W\rangle$, where $\mathcal{X}=\mathcal{Y}=\{0,1\},$ and $W = \left[\begin{array}{cc}
          1-p_b & p_b \\ p_b & 1-p_b 
\end{array}\right].$ The eavesdropper Eve observes a \emph{degraded} version of the main channel, $\langle\mathcal{X},\mathcal{Z},W^{*}\rangle$, with ${\mathcal{Z}=\{0,1\}}$ and $W^{*}=BSC(p_e)$, i.e., $W^{*} = \left[\begin{array}{cc}
          1-p_e & p_e \\ p_e & 1-p_e 
\end{array}\right].$

Precisely, a $(k,N)$ wiretap code over the $BSC(p_b,p_e)$ consists of the following components:
\begin{itemize}
    \item An encoder $\mathcal{E}$ mapping a $k$-bit secret message ${\mathbf{M}=(M_1,\ldots,M_k)}$ from Alice to an $N$-bit codeword ${\mathbf{X}=(X_1,\ldots,X_N)}$.
    \item A decoder $\mathcal{D}$ enabling Bob to estimate $\mathbf{M}$ from $\mathbf{Y}=(Y_1,\ldots,Y_N)$ obtained via $N$ independent uses of $W$.
    \item An eavesdropper observing $\mathbf{Z}=(Z_1,\ldots,Z_N)$ through $N$ independent uses of $W^*$.
\end{itemize}
The secrecy rate in this setup is defined as $R_s = \frac{k}{N}.$ Furthermore, from \cite{1055763}, the following is known about the secrecy capacity (max rate of transmission) of the wiretap $BSC(p_b,p_e).$
\begin{theorem}
    For uniformly distributed input information bits, the secrecy capacity of $BSC(p_b,p_e)$ is
    $$
        C_s = h_2(p_e) - h_2(p_b) ~\text{bits per channel use,}
    $$
    where $h_2(\cdot)$ denotes the binary entropy function.
\end{theorem}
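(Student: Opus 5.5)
The plan is to use the standard secrecy-capacity characterization for degraded wiretap channels (Wyner; Csisz\'ar--K\"orner, \cite{1055763}):
\[
C_s=\max_{P_X}\bigl[I(X;Y)-I(X;Z)\bigr],
\]
which holds because $W^*$ is a degraded version of $W$. We take this formula as given and only evaluate it for the BSC pair. The first step is to make degradedness explicit. Since $p_b<p_e<1/2$ (the physically meaningful regime), we write $BSC(p_e)=BSC(p_b)\circ BSC(q)$ with $q=(p_e-p_b)/(1-2p_b)\in(0,1/2)$. This follows from $p_b(1-q)+(1-p_b)q=p_e$. Hence $X\to Y\to Z$ is a Markov chain and the degraded formula applies.

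Next we compute the objective for an input $X\sim\mathrm{Bern}(\pi)$. We have
\[
I(X;Y)=h_2(\pi\star p_b)-h_2(p_b), \qquad I(X;Z)=h_2(\pi\star p_e)-h_2(p_e),
\]
where $a\star b=a(1-b)+(1-a)b$. The objective is therefore
\[
f(\pi)=h_2(p_e)-h_2(p_b)-\bigl[h_2(\pi\star p_e)-h_2(\pi\star p_b)\bigr].
\]
At $\pi=1/2$ both convolved entropies equal $1$, so $f(1/2)=h_2(p_e)-h_2(p_b)$. This matches the statement's ``uniformly distributed input'' and already gives achievability.

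The main obstacle is the converse: showing $h_2(\pi\star p_e)\ge h_2(\pi\star p_b)$ for every $\pi$, so that the uniform input is optimal. Since $p_e=p_b\star q$, we have $\pi\star p_e=(\pi\star p_b)\star q$. Moreover, for any $a$, the value $a\star q$ is a convex combination of $a$ and $1-a$, so it lies closer to $1/2$ than $a$ does. Because $h_2$ is symmetric about $1/2$ and increasing on $[0,1/2]$, it follows that $h_2(a\star q)\ge h_2(a)$. Applying this with $a=\pi\star p_b$ gives $f(\pi)\le f(1/2)$.

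As an alternative route, one could write $I(X;Y)-I(X;Z)=I(X;Y\mid Z)$ and note that it is concave in $P_X$ for degraded channels. Symmetry of the BSC under input flipping then forces the maximizer to be $\pi=1/2$.

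If $p_e>1/2$ is allowed, we would relabel Eve's outputs, which replaces $p_e$ by $1-p_e$ and leaves $h_2(p_e)$ unchanged. The degradedness argument then requires $\min(p_e,1-p_e)>p_b$. We would note this as the implicit assumption of the theorem.
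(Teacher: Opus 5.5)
Your proof is correct. It also goes further than the paper, which gives no argument for this statement and simply imports it from \cite{1055763} (Leung-Yan-Cheong's result that, for symmetric wiretap channels, the secrecy capacity is the difference of the two channel capacities; for two BSCs this is $(1-h_2(p_b))-(1-h_2(p_e))$).

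Your route is different. You specialize Wyner's degraded formula $\max_{P_X}[I(X;Y)-I(X;Z)]$ to the BSC pair and make degradedness explicit through $q=(p_e-p_b)/(1-2p_b)$. You then prove the uniform input is optimal via the elementary contraction $a\star q-\tfrac12=(1-2q)(a-\tfrac12)$. Your concavity-plus-symmetry alternative is also sound.

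One precision is worth stating. In the paper's model Eve taps $X$ directly, so $Y$ and $Z$ are conditionally independent given $X$. The chain $X\to Y\to Z$ therefore holds only for the physically degraded surrogate you build. This is harmless, because the secrecy capacity depends only on the marginals $P_{Y|X}$ and $P_{Z|X}$.

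What each approach buys:
\begin{itemize}
\item The citation covers symmetric channels beyond the BSC with no work.
\item Your argument is self-contained and checkable line by line, and it exposes a hypothesis the paper glosses over.
\end{itemize}

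That hypothesis matters. The system model only assumes $0<p_b<p_e<1$. Yet for $p_e>1-p_b$, or for $p_b\ge\tfrac12$, one has $h_2(p_e)<h_2(p_b)$. In that case Bob's channel is a degraded version of Eve's, so the true secrecy capacity is $0$, not the negative value the formula gives. The correct general statement is $C_s=[h_2(p_e)-h_2(p_b)]^+$. It reduces to the theorem exactly under the condition $p_b<\min(p_e,1-p_e)$ that you flagged. That condition holds automatically in the paper's simulations, where $p_b<p_e<\tfrac12$.
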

The well-known coding objective here is to achieve $C_s$ while guaranteeing both reliability for Bob and \emph{strong secrecy} against Eve. Formally, a sequence of codes must satisfy:
\begin{itemize}
\item Reliability:
\begin{equation}\label{eq-1:reliabilitycond}
\lim_{k\to\infty}\Pr({\mathbf{M}\neq \hat{\mathbf{M}}})=0.
\end{equation}
\item Strong Secrecy:
\begin{equation}\label{eq-2:securitycond}
\lim_{k\to\infty} I(\mathbf{M};\mathbf{Z})=0.
\end{equation}
\end{itemize}
In this paper we focus on \emph{finite} blocklengths and compare secrecy codes using two quantities: (i) Bob's frame error rate (FER) under practical decoding, and (ii) Eve's \emph{semantic-secrecy} advantage, denoted by $\delta^{ds}$ following \cite{bellare2012cryptographic}. For binary-input symmetric wiretap channels, strong secrecy bounds based on $I(\mathbf{M};\mathbf{Z})$ can be converted into semantic-secrecy bounds; we use this conversion to put wiretap coset codes (polar and PAC) and the invertible-extractor scheme on the same semantic-secrecy scale.

We next summarize the three finite-length constructions studied in this work and the bounds we use to evaluate them.

\subsection{Secrecy Codes and Their Finite-Length Bounds}
For our finite-length analysis, we consider two secrecy coding schemes: (i) the strong secrecy construction of \cite{Hessam-Vardy}, and (ii) the invertible extractor framework of \cite{bellare2012cryptographic}. A brief review of these schemes is given below.

Let $\mathcal{C}(\cdot)$ denote the underlying error-correcting encoder. 

\emph{Strong Secrecy Codes \cite{Hessam-Vardy}:} In this approach, a codeword of blocklength $N$ is partitioned into three subsets: $\mathcal{A}$ (information bits), $\mathcal{R}$ (random bits), and $\mathcal{B}$ (frozen bits). Let $|\mathcal{A}| = k$ and $|\mathcal{R}| = r$, which implies $|\mathcal{B}| = N - k - r$. Then,
\begin{definition}\label{defn-secrecycodes1}
The encoder for a secrecy code of length $N$ associated with $(\mathcal{A}, \mathcal{R})$ is a function
$
    \mathcal{E}:\{0,1\}^k \times \{0,1\}^r \to \{0,1\}^N.
$
It takes as input a secret message $\mathbf{M} \in \{0,1\}^k$ and a random vector $\mathbf{R} \in \{0,1\}^r$, whose entries are drawn independently and uniformly at random. The encoder constructs an $N$-bit vector $\mathbf{V}$ by setting $\mathbf{V}_{\mathcal{A}} = \mathbf{M}$, $\mathbf{V}_{\mathcal{R}} = \mathbf{R}$, and $\mathbf{V}_{\mathcal{B}} = 0$. The final codeword $\mathbf{X}$ is obtained as
$\mathcal{E}(\mathbf{M},\mathbf{R}) = \mathcal{C}(\mathbf{V}).$
\end{definition}
When $\mathcal{C}(\cdot)$ is chosen to be a Polar code, the following upper bound holds on Eve’s information leakage \cite{Hessam-Vardy}.
\begin{theorem}\label{thm:secrecy}
    [The Equivocation Bound] The MIS leakage of polar secrecy codes associated with $(\mathcal{A},\mathcal{R})$ satisfies
    \begin{align}\label{eq:secrecy}
        I(\mathbf{M};\mathbf{Z}) \leq \sum_{i \in \mathcal{R}^c} C(W_i),
    \end{align}
    where $W_i$ is the $i$-th polarized bit-channel corresponding to the main channel $W$.
\end{theorem}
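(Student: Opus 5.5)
Throughout, I read $W_i$ as the $i$-th bit-channel synthesized from \emph{Eve's} channel $W^*$. The leakage concerns $\mathbf{Z}$, so this is the reading in \cite{Hessam-Vardy}; the bound is stated for it, and it is consistent with the later use of Tal--Vardy capacities for Eve. The plan is to reduce to a setting where all $N$ input bits $\mathbf{V}=(V_1,\ldots,V_N)$ are i.i.d.\ uniform, and then use the chain rule along the polar decoding order.

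\textbf{Step 1: remove the frozen bits by symmetry.} I will compare with an auxiliary experiment in which $\mathbf{V}_{\mathcal{B}}$ is i.i.d.\ uniform instead of $0$, independent of $(\mathbf{M},\mathbf{R})$. The polar transform is linear, so fixing $\mathbf{V}_{\mathcal{B}}=\mathbf{b}$ shifts every codeword by the fixed vector $\mathbf{c}_{\mathbf{b}}=\mathcal{C}(\mathbf{0}_{\mathcal{A}\cup\mathcal{R}},\mathbf{b})$. Since $W^*$ is a BSC, this shift acts on $\mathbf{Z}$ as the bijection $\mathbf{z}\mapsto\mathbf{z}\oplus\mathbf{c}_{\mathbf{b}}$, which does not depend on $\mathbf{M}$. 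Hence $I(\mathbf{M};\mathbf{Z}\mid\mathbf{V}_{\mathcal{B}}=\mathbf{b})$ is the same for every $\mathbf{b}$ and equals the leakage of the actual code. Averaging over $\mathbf{b}$, and using that $\mathbf{M}$ is independent of $\mathbf{V}_{\mathcal{B}}$,
\begin{align*}
I(\mathbf{M};\mathbf{Z}) = I(\mathbf{M};\mathbf{Z}\mid\mathbf{V}_{\mathcal{B}}) = I(\mathbf{M};\mathbf{Z},\mathbf{V}_{\mathcal{B}}) - I(\mathbf{M};\mathbf{V}_{\mathcal{B}}) \le I(\mathbf{M},\mathbf{V}_{\mathcal{B}};\mathbf{Z}) + I(\mathbf{M};\mathbf{V}_{\mathcal{B}}\mid\mathbf{Z}),
\end{align*}
which is not yet the desired inequality. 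I will instead use the cleaner route $I(\mathbf{M};\mathbf{Z}\mid\mathbf{V}_{\mathcal{B}})\le I(\mathbf{M},\mathbf{V}_{\mathcal{B}};\mathbf{Z})$. This holds by the chain rule, because $I(\mathbf{M},\mathbf{V}_{\mathcal{B}};\mathbf{Z}) = I(\mathbf{V}_{\mathcal{B}};\mathbf{Z}) + I(\mathbf{M};\mathbf{Z}\mid\mathbf{V}_{\mathcal{B}})$. Since $\mathcal{R}^c=\mathcal{A}\cup\mathcal{B}$, this gives $I(\mathbf{M};\mathbf{Z})\le I(\mathbf{V}_{\mathcal{R}^c};\mathbf{Z})$, with $\mathbf{V}$ now fully i.i.d.\ uniform.

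\textbf{Step 2: chain rule over $\mathcal{R}^c$.} Write $\mathcal{R}^c=\{i_1<\cdots<i_m\}$. Then
\begin{align*}
I(\mathbf{V}_{\mathcal{R}^c};\mathbf{Z})=\sum_{j=1}^m I(V_{i_j};\mathbf{Z}\mid V_{i_1},\ldots,V_{i_{j-1}}).
\end{align*}
For each term, the independence of $V_{i_j}$ from $S_j=(V_{i_1},\ldots,V_{i_{j-1}})$ gives $I(V_{i_j};\mathbf{Z}\mid S_j)=I(V_{i_j};\mathbf{Z},S_j)$. Since $S_j$ is a subvector of $V_1^{i_j-1}$, monotonicity of mutual information gives $I(V_{i_j};\mathbf{Z},S_j)\le I(V_{i_j};\mathbf{Z},V_1^{i_j-1})$.

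\textbf{Step 3: identify the bit-channel capacities.} The quantity $I(V_i;\mathbf{Z},V_1^{i-1})$ with i.i.d.\ uniform $\mathbf{V}$ is exactly the symmetric mutual information $I(W_i)$ of Arıkan's $i$-th synthesized channel $W_i:V_i\mapsto(\mathbf{Z},V_1^{i-1})$ built from $W^*$. Bit-channels of a binary-input symmetric channel are themselves symmetric, so uniform input is capacity-achieving and $I(W_i)=C(W_i)$. Summing over $i\in\mathcal{R}^c$ then yields \eqref{eq:secrecy}.

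The main obstacle is Step 1, the frozen-bit reduction. One must verify carefully that a deterministic frozen vector can be replaced by a uniform one without decreasing the bound. This relies on the linearity of $\mathcal{C}$ together with the output symmetry of the BSC, so that the leakage is invariant under the choice of coset. I expect this to be the only place where the symmetry of $W^*$ is genuinely needed; Steps 2 and 3 are routine once it is in place.
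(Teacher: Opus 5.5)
Your proof is correct and is essentially the Mahdavifar--Vardy argument that the paper relies on by citation; the paper gives no proof of its own. You randomize the frozen bits using the additive symmetry of the BSC, bound the leakage by $I(\mathbf{V}_{\mathcal{R}^c};\mathbf{Z})$ for i.i.d.\ uniform $\mathbf{V}$ (this implicitly uses a uniform $\mathbf{M}$, which is the paper's strong-secrecy convention), and apply the chain rule with independence and monotonicity to reach Eve's bit-channel capacities. Your reading of $W_i$ as Eve's bit-channel is the intended one, and your bound also implies the statement as literally written: degradedness of $W^*$ with respect to $W$ is inherited by the synthesized bit-channels, so $C(W^*_i)\le C(W_i)$.
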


\emph{Invertible Extractor Codes\cite{bellare2012polynomial}:}
In this scheme, the message $\mathbf{M}$ of length $m$ is divided into $t$ blocks, each of length $b = (h2(p_e) - h2(p_b))N,$
where $p_b$ and $p_e$ are the crossover probabilities of the main and eavesdropper channels, respectively. Then,

\begin{definition}\label{defn:secrecycodes-2}
The encoder divides $\mathbf{M}$ into $t$ blocks $M[1], M[2], \ldots, M[t],$ each of length $b$. It selects a random $k$-bit string $A$ along with independent random strings $R[1], R[2], \ldots, R[t]$. Next, the codeword $\mathbf{X}$ is constructed as follows:
\begin{align*}
\mathcal{E}(\mathbf{M}) = \mathcal{C}(A) || \mathcal{C}(A \odot (M[1]|R[1])) || \ldots|| \mathcal{C}(A \odot (M[t]|R[t])),
\end{align*}
where $\odot$ denotes multiplication of $k$-bit strings interpreted as elements of the extension field $\mathrm{GF}(2^k)$, and $||$ denotes concatenation. Here, $\mathcal{C}$ can be any linear block-code including Polar/PAC codes.
\end{definition}

For this construction, the information leakage to Eve under any general error-correcting code $\mathcal{C}(\cdot)$ is bounded as follows.

\begin{theorem}\label{thm-ie}
Let $\delta^{ds}$ denote the distinguishing security advantage of the scheme; see \cite{bellare2012polynomial} for the definitions. Then, we have 
$\delta^{ds} (\text{extractor}) \leq 6.2^{-\sqrt{N}}.$  
Further, using \cite{bellare2012cryptographic}[Lemma~4.9], the Eve's mutual information leakage is bounded by
\begin{align}\label{eq:ie}
    I(\mathbf{M};\mathbf{Z}) \leq 2\delta^{ds}\log\left( \frac{2^N}{\delta^{ds}}\right).
\end{align}
\end{theorem}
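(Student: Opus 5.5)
The plan has two parts, matching the two claims of Theorem~\ref{thm-ie}: a bound on the distinguishing advantage of the extractor scheme, and a conversion of that bound into a mutual-information bound.

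\textbf{First claim.} I would take the bound $\delta^{ds}\le 6\cdot 2^{-\sqrt{N}}$ from the analysis of \cite{bellare2012polynomial}. That analysis has three steps.
\begin{enumerate}
\item Under Eve's channel, the $\mathrm{GF}(2^k)$ multiplication $A\odot(\cdot)$ acts as a universal hash family with seed $A$. The seed $A$ is sent in the clear, since $\mathcal{C}(A)$ is part of the codeword.
\item The leftover hash lemma applies blockwise. The block length $b=(h_2(p_e)-h_2(p_b))N$ leaves a slack of order $\sqrt{N}$ relative to Eve's smooth min-entropy of each block. That min-entropy is obtained from the BSC$(p_e)$ noise via a Chernoff/typicality bound.
\item Each block's statistical distance is then of order $2^{-\sqrt{N}}$. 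The constant $6$ absorbs the smoothing term and the union over the $t$ blocks, with $t$ treated as fixed relative to the exponent.
\end{enumerate}
Since this is exactly the cited result, I would invoke it rather than rederive it. I would only check that the parameter choices in Definition~\ref{defn:secrecycodes-2} match those of \cite{bellare2012polynomial}.

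\textbf{Second claim: setup.} For \eqref{eq:ie}, I would apply \cite{bellare2012cryptographic}[Lemma~4.9], which converts distinguishing security into mutual-information security. Let $\epsilon=\delta^{ds}$. Distinguishing security gives, for every pair of messages $m,m'$, a total-variation bound $\|P_{\mathbf{Z}|m}-P_{\mathbf{Z}|m'}\|_{TV}\le\epsilon$.

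\textbf{Second claim: steps.} From this I would argue as follows.
\begin{enumerate}
\item For any message distribution, $\|P_{\mathbf{Z}|m}-P_{\mathbf{Z}}\|_{TV}\le \epsilon$, because $P_{\mathbf{Z}}$ is a convex combination of the conditionals $P_{\mathbf{Z}|m'}$.
\item Write $I(\mathbf{M};\mathbf{Z})=\sum_m P(m)\,[H(\mathbf{Z})-H(\mathbf{Z}|\mathbf{M}=m)]$.
\item Apply the continuity of entropy in total variation (Fannes--Audenaert type): $|H(P)-H(Q)|\le 2\epsilon\log(|\mathcal{Z}^N|/\epsilon)$ when $\|P-Q\|_{TV}\le\epsilon$, and $\epsilon$ is small.
\item Average over $m$ and use $|\mathcal{Z}^N|=2^N$ to get $I(\mathbf{M};\mathbf{Z})\le 2\epsilon\log(2^N/\epsilon)$.
\end{enumerate}

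\textbf{Main obstacle.} I expect the delicate step to be the constant and range condition in the continuity bound. The map $x\mapsto x\log(2^N/x)$ is increasing only for $x\le 2^N/e$. This holds trivially here, since $\epsilon\le1$. The factor $2$ must be tracked consistently with the total-variation normalization used in the definition of $\delta^{ds}$. If Lemma~4.9 uses the $\ell_1$ distance instead, I would rescale $\epsilon$ accordingly. Finally, substituting $\delta^{ds}\le 6\cdot2^{-\sqrt{N}}$ into the right-hand side gives an explicit finite-$N$ leakage bound. Monotonicity of the right-hand side in $\delta^{ds}$ on this range justifies the substitution.
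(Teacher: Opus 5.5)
Your proposal is correct and matches the paper, which gives no independent argument here. The paper imports $\delta^{ds}\le 6\cdot 2^{-\sqrt{N}}$ from \cite{bellare2012polynomial} (instantiated via its Lemma~5.10) and the leakage conversion from \cite[Lemma~4.9]{bellare2012cryptographic}; your unpacking of the latter (convexity to pass from pairwise to marginal total-variation bounds, then entropy continuity averaged over $m$) is the standard argument behind that lemma.

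Two small cautions. Your heuristic for the constant $6$ is not load-bearing and should not be relied on; in particular, a union over $t$ blocks would give a factor growing with $t$, which a fixed constant cannot absorb. Also, the $2^N$ in \eqref{eq:ie} presumes, as you do, that $N$ is the length of Eve's entire observation $\mathbf{Z}$ rather than the per-block length of $\mathcal{C}$.
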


\subsection{Putting All Schemes on a Semantic-Secrecy Metric}
The finite-length secrecy study for wiretap coset codes (such as the polar/PAC secrecy codes) uses the notion of strong secrecy assuming that the message $\mathbf{M}$ is uniformly random to bound Eve's knowledge. However, the notion of distinguishing or semantic secrecy used for studying invertible extractor based secrecy codes maximizes Eve's knowledge over all message distributions. 

In \cite[Section 4.6]{bellare2012cryptographic}, strong secrecy with uniformly random messages is shown to be equivalent to semantic secrecy for binary-input, binary-vector-output symmetric wiretap channels. The same proof can be readily extended to the case of binary-input, symmetric wiretap channels with no restriction on the output alphabet size. A proof sketch is included in the appendix for clarity and completion.

So, wiretap coset codes with strong secrecy are also semantically secure when the wiretap channel is binary-input and symmetric. In particular, using \cite[Theorem 4.5]{bellare2012cryptographic}, we have 
\begin{equation}
    \delta^{ds}(\text{coset code}) \le \sqrt{2I(\mathbf{M};\mathbf{Z})},
\end{equation}
where $\mathbf{M}$ is a uniformly random message in a wiretap coset code used over a binary-input, symmetric wiretap channel with output $\mathbf{Z}$.

In what follows, we analyze and compare Polar and PAC codes when used as wiretap coset codes versus invertible extractors, along with their reliability guarantees, in the \emph{finite-blocklength} regime. Before presenting our empirical analysis, we highlight a key theoretical insight: the secrecy bound in \eqref{eq:secrecy} remains unchanged when the underlying error-correcting code is replaced by a PAC code. This invariance follows from the result that, under the secrecy coding schemes considered, the bit-channels induced by PAC and Polar codes---when observed between Alice and Eve---are equivalent in the information-theoretic sense.

The next theorem summarizes the key property used in our finite-length comparison.

\begin{theorem}[Polar--PAC equivalence for Eve]\label{thm:equiv_polar_pac}
Fix a blocklength $N$ and rate-profile (information/frozen positions). Consider a polar code and the corresponding PAC code that uses the same polar transform $G_N$ with an invertible convolutional precoder $T_N^g$. For any binary-input memoryless channel observed by Eve, the synthesized bit-channels seen by Eve under the polar and PAC constructions are equivalent up to a column permutation. In particular, the bit-channel mutual informations (and hence the leakage upper bounds computed from bit-channel capacities) are identical for polar and PAC coset codes.
\end{theorem}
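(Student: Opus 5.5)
The plan is to work directly with the PAC encoding map $\mathbf{X}=\mathbf{U}T_N^g G_N$. Here $T_N^g$ is the upper-triangular Toeplitz matrix generated by the convolution polynomial $g=(g_0,\dots,g_m)$ with $g_0=1$. The idea is to show that each PAC bit-channel is obtained from the corresponding polar bit-channel by a bijective relabelling of its output together with an output-dependent, invertible relabelling of its input. Write $\mathbf{V}=\mathbf{U}T_N^g$, so that $\mathbf{X}=\mathbf{V}G_N$ is an ordinary polar codeword in the $\mathbf{V}$ domain. Because $T_N^g$ is unit upper triangular, we have $V_i=U_i\oplus f_i(U^{i-1})$ with $f_i(U^{i-1})=\bigoplus_{j=1}^{m} g_j U_{i-j}$. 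The prefix map $U^{i-1}\mapsto V^{i-1}$ is a bijection $\phi_{i-1}$ of $\{0,1\}^{i-1}$, given by the leading $(i-1)\times(i-1)$ block of $T_N^g$, which is also unit triangular. Consequently $f_i$ can equally be written as a function $\tilde f_i(V^{i-1})$ of the $V$-prefix.

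The first step is to write down Eve's $i$-th synthesized channel under PAC, $W^{(i)}_{\mathrm{PAC}}(\mathbf{z},u^{i-1}\mid u_i)$, and express it through the polar bit-channel $W^{*(i)}_N$:
\begin{align*}
W^{(i)}_{\mathrm{PAC}}(\mathbf{z},u^{i-1}\mid u_i)=W^{*(i)}_N\bigl(\mathbf{z},\phi_{i-1}(u^{i-1})\mid u_i\oplus f_i(u^{i-1})\bigr).
\end{align*}
This identity follows by summing $W^{*N}(\mathbf{z}\mid \mathbf{v}G_N)$ over the future coordinates. The sum over $u_{i+1}^N$ becomes a sum over $v_{i+1}^N$, since for a fixed prefix the map between the two suffixes is again a bijection. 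The displayed identity is exactly the ``equivalence up to a column permutation'' claimed in Theorem~\ref{thm:equiv_polar_pac}. The output alphabet $\mathcal{Z}^N\times\{0,1\}^{i-1}$ is permuted by $(\mathbf{z},u^{i-1})\mapsto(\mathbf{z},\phi_{i-1}(u^{i-1}))$. Within each output block the two input symbols are swapped or not according to the known bit $f_i(u^{i-1})$. This swap is a column permutation of the transition matrix with rows indexed by outputs.

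The second step is to conclude equality of mutual informations under uniform inputs. Let $\mathbf{U}$ be uniform, so that $\mathbf{V}$ is also uniform, and set $Y_i=(\mathbf{Z},U^{i-1})$, which carries the same information as $(\mathbf{Z},V^{i-1})$. Then $H(U_i\mid Y_i)=H(V_i\mid \mathbf{Z},V^{i-1})$, because $U_i$ and $V_i$ differ by a bit that is known given the conditioning. Both $U_i$ and $V_i$ are uniform and independent of their respective prefixes, so $I(U_i;\mathbf{Z},U^{i-1})=I(V_i;\mathbf{Z},V^{i-1})=C(W^{*(i)}_N)$, the last equality holding for symmetric channels. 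Bhattacharyya parameters and Tal--Vardy upper and lower approximations are invariant under output relabelling and the input swap for the same reason. The resulting capacities are therefore identical, and the leakage bound of Theorem~\ref{thm:secrecy} transfers verbatim.

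The main obstacle is making this last transfer honest. In the PAC coset code the frozen and random bits are set in the $\mathbf{U}$ domain, so $V_{\mathcal{B}}$ is generally not zero but equal to $\tilde f(V^{i-1})$. I would re-run the chain-rule argument behind Theorem~\ref{thm:secrecy} entirely in the $\mathbf{U}$ domain. That argument bounds $I(\mathbf{M};\mathbf{Z})$ by $I(U_{\mathcal{R}^c};\mathbf{Z}\mid U_{\mathcal{R}})$-type terms, expanded as a sum of $I(U_i;\mathbf{Z}\mid U^{i-1})\le I(U_i;\mathbf{Z},U^{i-1})$, with uniform auxiliary bits in place of frozen ones, using the fact that for symmetric channels fixing frozen values does not change these quantities. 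I would then substitute the per-index equality from the second step. I expect the delicate point to be justifying the replacement of deterministic frozen values by uniform ones, since the induced $V$-values now depend on earlier message bits; I would handle this by appealing to channel symmetry, under which every coset shift yields an equivalent output distribution.
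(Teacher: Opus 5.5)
\textbf{There is a genuine gap in the column-permutation claim.} Your first step is the same computation as the paper's. The suffix bijection $u_{i+1}^N\leftrightarrow v_{i+1}^N$ for a fixed prefix gives $\tilde W_i(\mathbf{z},u^{i-1}\mid u_i)=W_i(\mathbf{z},\phi_{i-1}(u^{i-1})\mid u_i\oplus f_i(u^{i-1}))$.

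The gap is your next sentence, that this identity ``is exactly'' the column-permutation equivalence. It is not. The input swap is applied only to the blocks with $f_i(u^{i-1})=1$, and a row exchange applied to some blocks but not others is not, by itself, a column permutation of the transition matrix.
\begin{itemize}
\item In the paper's convention (rows indexed by the two inputs, columns by outputs), a column permutation is a pure output relabelling, and an input swap is not one.
\item In your transposed convention, swapping the two input columns on some row-blocks but not on others is not a permutation of the columns either.
\end{itemize}
The missing idea is that every fixed-prefix block $W_i(v_1^{i-1})$ is itself symmetric. Its row exchange therefore coincides with a permutation of its own columns, and only then do the block-wise operations assemble into a global output relabelling. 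This is the paper's Lemma~\ref{lem:sym}, proved by induction on $n$ through the polar recursion. It also follows from Ar{\i}kan's bit-channel symmetry relation by taking $a_1^{i-1}=0$ and $a_i=1$.

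This is exactly where symmetry of Eve's channel enters. So the column-permutation part should be argued for symmetric $W^*$ such as the BSC, not for an arbitrary binary-input channel. Without it, your identity only establishes ``output relabelling plus block-dependent input swaps,'' which is weaker than what the theorem asserts.

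\textbf{Your second step is correct and cleaner than the paper's route.} It handles the ``in particular'' clause well. The pairs $(\mathbf{Z},U^{i-1})$ and $(\mathbf{Z},V^{i-1})$ are in bijection, $U_i\oplus V_i$ is a function of them, and $U_i,V_i$ are both uniform and independent of their prefixes. The uniform-input mutual informations therefore coincide directly. You need neither the chain-rule induction of Appendix~\ref{alternate-proof} nor any symmetry; symmetry is used only when you identify this common value with $C(W^{*(i)}_N)$.

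Your $\mathbf{U}$-domain re-derivation of the leakage bound also goes through. With $U_{\mathcal{B}}$ fixed, the PAC coset code is still a linear coset code whose generator rows come from $T_N^gG_N$. Changing the frozen values only translates every codeword by a fixed vector, which channel symmetry absorbs.
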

\begin{proof}
A detailed proof, based on channel symmetry and an explicit permutation argument, is provided in Appendix~\ref{app:equiv}.
\end{proof}

\section{Finite-Length Empirical Comparison}
\subsection{Simulation setup and evaluation methodology}
We consider degraded wiretap channels $BSC(p_b,p_e)$ with $p_b<p_e<\tfrac12$. For each blocklength $N$, we report two finite-length metrics:
(i) Bob's FER under successive-cancellation list (SCL) decoding, and (ii) Eve's semantic-secrecy advantage $\delta^{ds}$.
For polar and PAC wiretap coset codes (Defn.~\ref{defn-secrecycodes1}), we upper bound $I(\mathbf{M};\mathbf{Z})$ using the bit-channel-capacity sum in \eqref{eq:secrecy} (computed via the Tal--Vardy construction), and then convert it to a semantic-secrecy bound using $\delta^{ds}\le \sqrt{2I(\mathbf{M};\mathbf{Z})}$ for symmetric wiretap channels. For the IE scheme (Defn.~\ref{defn:secrecycodes-2}), we compute $\delta^{ds}$ directly using the bounds in \cite{bellare2012polynomial,bellare2012cryptographic}.

\begin{remark}
For the secrecy bounds, polar and PAC coset codes are identical: the PAC convolutional precoder does not change Eve's synthesized bit-channels (up to a permutation), so the leakage upper bounds computed from \eqref{eq:secrecy} coincide. The empirical differences between polar and PAC therefore appear only in Bob's reliability.
\end{remark}

\subsection{Semantic secrecy comparison of polar, PAC, and IE wiretap coset codes}
In Table~\ref{tab:performance}, we evaluate the finite-length secrecy rates achieved by Polar and PAC codes over the wiretap channel under the polar secrecy construction of Defn.~\ref{defn-secrecycodes1}. We consider a main channel with cross over probability $p_b = 0.05$ and explicitly compute the upper bound on Eve's information leakage, denoted by $\bar{I}$, as characterized in Theorem~\ref{thm:secrecy}. The evaluation is performed across different values of Eve’s crossover probability $p_e$ and varying message lengths $k$, while maintaining that Bob’s Frame Error Rate (FER) within the range $[0.05, 0.06]$. For each configuration, we report both the nominal secrecy rate $k/N$, which ignores Eve’s information leakage, and the effective secrecy rate $R_{\mathrm{eff}} = (k-\bar{I})/N$, which accounts for the finite-length leakage bound. In addition, we present the resulting semantic-secrecy guarantee using the bound $\delta^{ds}(polar) \leq \sqrt{2\bar{I}}$. As a direct consequence of our main result (Theorem~\ref{thm:equiv_polar_pac}), this semantic-secrecy bound coincides for Polar and PAC codes, thereby demonstrating that PAC codes inherit the same finite-length secrecy guarantees under the proposed construction.

For comparison, we also compute the semantic-secrecy bound of the Invertible Extractor (IE) scheme (Defn.~\ref{defn:secrecycodes-2}) at blocklength $N=256$ using the bound in Theorem~\ref{thm-ie}. This yields a direct finite-length comparison between the code-based secrecy constructions in Defn.~\ref{defn-secrecycodes1} and the extractor-based secrecy construction in Defn.~\ref{defn:secrecycodes-2}. From this analysis, we observe that the secrecy bound under the IE construction \cite{bellare2012polynomial} remains the same for a fixed blocklength, irrespective of the error probabilities. In contrast, the leakage bound for the Polar/PAC secrecy codes \cite{Hessam-Vardy} improves as Eve’s error probability increases. 

\begin{remark}
   Although the semantic-secrecy bound for the IE scheme at $N=256$ is theoretically computed to be $13$ using Theorem~\ref{thm-ie}, and as shown in Table~\ref{tab:performance}, this bound does not reflect the reliability limitations at this blocklength \cite[Lemma~5.10]{bellare2012polynomial}. In fact, for the tested parameters, there is no possible IE construction as per the scheme in Lemma~5.10 (i.e., there exists no positive $b(s)$ corresponding to Bob FER $\leq 0.3$), demonstrating that the \emph{`IE scheme is unreliable in practice at blocklengths less than or equal to 256'}. Since meaningful secrecy guarantees require reliable communication at Bob, a direct comparison at $N=256$ would not be operationally fair. To enable a more realistic and balanced comparison between the code-based and extractor-based constructions, we therefore conduct additional simulations at blocklength $N=512$. The corresponding analysis is presented in Sec.~\ref{sec: more-simulations}.
\end{remark}
\begin{table}[t]
\centering
\renewcommand{\arraystretch}{1.1}
\setlength{\tabcolsep}{3.5pt}
\small
\begin{tabular}{|c|c|c|c|c|c|c|c|c|}
\hline
\multicolumn{7}{|c|}{} & \multicolumn{2}{c|}{$\delta^{ds}$} \\
\hline
$p_e$ & $k$ & $\bar{I}$ & $k-\bar{I}$ & $C_s$ & $R_s$ & $R_{\mathrm{eff}}$ 
& Polar/PAC  & IE \\
\hline
0.15 & 72  & 58 & 14  & 0.323 & 0.281 & 0.055 & 11 & 13 \\
0.20 & 92  & 46 & 46  & 0.435 & 0.359 & 0.180 &  10 & 13 \\
0.25 & 104 & 33 & 71  & 0.525 & 0.406 & 0.277 &   9 & 13 \\
0.30 & 113 & 24 & 89  & 0.595 & 0.441 & 0.348 &  7 & 13 \\
0.35 & 117 & 13 & 104 & 0.648 & 0.457 & 0.406 &  5 & 13 \\
0.40 & 121 &  7 & 114 & 0.685 & 0.473 & 0.445 &  4 & 13 \\
\hline
\end{tabular}
\caption{Finite-length secrecy evaluation for $BSC(p_b,p_e)$ at $N=256$ and $p_b=0.05$. 
Here, $p_e$ is Eve's crossover probability, $k$ is the number of message bits, 
$\bar{I}$ is an upper bound on $I(\mathbf{M};\mathbf{Z})$ as per \eqref{eq:secrecy}, 
$R_s = k/N$ is the secrecy rate, and 
$R_{\mathrm{eff}} = (k-\bar{I})/N$ is the effective secrecy rate; $C_s$ is the wiretap channel capacity and $\delta^{ds}$ is the semantic-secrecy bound.}
\label{tab:performance}
\end{table}
\begin{table}[t]
\centering

\renewcommand{\arraystretch}{1.2}
\setlength{\tabcolsep}{8pt}
\begin{tabular}{|c|c|c|c|c|}
\hline
\multicolumn{3}{|c|}{} 
& \multicolumn{2}{|c|}{Information Leakage Bound ($\delta^{ds}$)} \\
\cline{4-5}
\hline
$p_e$ & $b(s)/N$ & $k(s)/N$ & IE & Polar/ PAC \\
\hline
0.391 & 0.0413 & 0.8546 & 21 & 6 \\
0.411 & 0.0529 & 0.8546 & 21 & 5 \\
0.431 & 0.0621 & 0.8546 & 21 & 4 \\
0.451 & 0.0689 & 0.8546 & 21 & 3 \\
0.471 & 0.0735 & 0.8546 & 21 & 2 \\
0.491 & 0.0757 & 0.8546 & 21 & 1 \\
\hline
\end{tabular}
\caption{Finite-length information leakage comparison at $N=512$, $p_b=0.005$. Here, $b(s)$ and $k(s)$ are parameters as defined in \cite[Lemma~5.10]{bellare2012polynomial}.}\label{tab:performance2}
\end{table}

\subsection{More Simulations: Semantic secrecy comparison with the invertible-extractor scheme} \label{sec: more-simulations}
In Table~\ref{tab:performance2}, we compare the finite-length performance of the secrecy constructions defined in Defn.~\ref{defn-secrecycodes1} and Defn.~\ref{defn:secrecycodes-2} at blocklength $N=512$. The main channel crossover probability is fixed to $p_b = 0.005,$ and Bob's frame error rate (FER) is maintained at approximately $0.004$ (equivalently, $\nu(s) = 0.1$ in the notation of \cite[Lemma~5.10]{bellare2012polynomial}.

The IE construction (Defn.~\ref{defn:secrecycodes-2}) is instantiated using the Invert-then-Encode (ItE)-framework as proposed in \cite[Lemma~5.10]{bellare2012polynomial}. We use Polar code for the blockcode $\mathcal{C}$ in the IE construction. For a $BSC(p_b,p_e)$ channel, the code parameters satisfy $k(s) = (1-h_2(p_b) - \nu(s))N,$ where $k(s)$ is the minimum secret message length, $\nu(s)$ is the bound on Bob's FER, and $N$ is the block length; and, the extractor parameters are chosen as per the construction defined in \cite[Lemma~5.10]{bellare2012polynomial}. Then, the semantic-secrecy guarantee is proven to be upper bounded by $6 . 2^{-\sqrt{N}}.$ At a fixed blocklength $N=512,$ this bound evaluates to a constant that depends only on $N,$ and not explicitly on Eve's crossover probability $p_e.$ Consequently, the IE semantic-secrecy bound remains essentially constant across the tested values of $p_e,$ as reflected in Table~\ref{tab:performance2}.

In contrast, the Polar/PAC secrecy code construction (Defn.~\ref{defn-secrecycodes1}) derives its secrecy guarantee by explicitly upper-bounding the mutual information of Eve's synthesized bit-channels. As $p_e$ increases Eve's channel becomes more degraded, and a larger fraction of her bit-channels fall into the poor set---defined in \cite{Hessam-Vardy}. Since the leakage bound $\bar{I}$ is determined by the aggregate contribution of these bit-channels, the resulting semantic-secrecy bound $\delta^{ds} \leq \sqrt{2\bar{I}}$ tightens as $p_e$ increases.

Hence, we can conclude that while the IE construction \cite{bellare2012polynomial} provides a block length-driven secrecy guarantee that is independent of the specific value of $p_e,$ the Polar and PAC secrecy code constructions \cite{Hessam-Vardy} yield channel-dependent secrecy bounds that improve with increasing degradation of Eve’s channel.

Note that for Polar/PAC the secrecy bounds of Polar/PAC using secrecy coding scheme from Defn.~\ref{defn-secrecycodes1} as shown in Table~\ref{tab:performance} can be further improved by choosing a larger truncation value on the output alphabet size (for instance, $\mu = 256$). Another important observation is that the secrecy guarantees derived from Defn.~\ref{defn-secrecycodes1} are specific to Polar and PAC linear block code constructions. In contrast, the IE-based bounds apply more generally, as they are obtained through the extractor-based framework and do not rely on the specific structural properties of Polar or PAC codes.

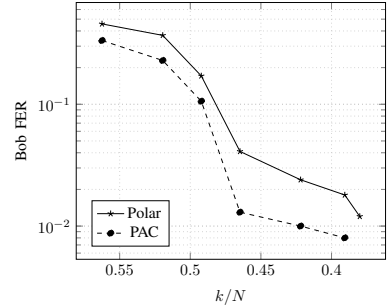
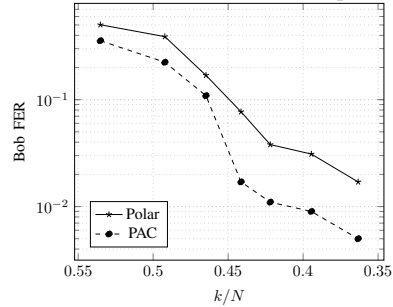
\begin{figure}[ht!]
  \centering
  \begin{subfigure}{0.5\textwidth}
    \centering
     \begin{tikzpicture}[scale=0.6]
        \begin{semilogyaxis}[
            grid=both, 
            grid style=dotted,
            xlabel={$k/N$},
            ylabel={Bob FER}, 
            x dir=reverse, legend entries = {Polar, PAC }, legend style ={at={(0.05,0.05)},anchor=south west}
        ]
        \addplot[color=black,mark=star] coordinates {
            (0.5625,0.457)
            (0.5195,0.368)
            (0.4922,0.171)
            (0.4648,0.041)
            (0.4219,0.024)
            (0.3906,0.018)
            (0.38,0.012)
        };
        \addplot[color=black,mark=*,dashed] coordinates {
            (0.5625,0.334)
            (0.5195,0.229)
            (0.4922,0.106)
            (0.4648,0.013)
            (0.4219,0.010)
            (0.3906,0.008)
            (0.38,0)
        };
        
        \end{semilogyaxis}
    \end{tikzpicture}
    \caption{$p_b = 0.05, p_e = 0.4, C_s = 0.6846$ bits per channel use}
    \label{fig:setup1}
  \end{subfigure}
  \begin{subfigure}{0.5\textwidth}
    \centering
    \begin{tikzpicture}[scale=0.6]
        \begin{semilogyaxis}[
            grid=both, 
            grid style=dotted,
            xlabel={$k/N$},
            ylabel={Bob FER}, 
            x dir=reverse, legend entries = {Polar, PAC }, legend style ={at={(0.05,0.05)},anchor=south west}
        ]
        \addplot[color=black,mark=star] coordinates {
            (0.5352,0.502)
            (0.4922,0.389)
            (0.4648,0.169)
            (0.4414,0.077)
            (0.4219,0.038)
            (0.3945,0.031)
            (0.3633,0.017)
        };
        \addplot[color=black,mark=*,dashed] coordinates {
            (0.5352,0.356)
            (0.4922,0.224)
            (0.4648,0.109)
            (0.4414,0.017)
            (0.4219,0.011)
            (0.3945,0.009)
            (0.3633,0.005)
        };
        \end{semilogyaxis}
    \end{tikzpicture}
    \caption{$p_b = 0.05, p_e = 0.3, C_s = 0.5949$ bits per channel use}
    \label{fig:setup2}
  \end{subfigure}
  \caption{Performance of Polar and PAC codes in terms of Bob FER for Wiretap BSC}
  \label{fig:wiretap-polar-pac}
\end{figure}
\vspace{-0.1in}
\subsection{Reliability over the main channel}
In Fig.\ref{fig:wiretap-polar-pac}, we illustrate Bob’s Frame Error Rate (FER) under different wiretap channel configurations. In particular, Fig.\ref{fig:setup1} presents the results for $BSC(0.05, 0.4)$, where Bob’s FER is plotted for rates near the secrecy capacity $C_s = 0.6846$. Similarly, Fig.~\ref{fig:setup2} considers $BSC(0.05,0.3)$ with the rates chosen near the corresponding secrecy capacity $C_s = 0.5949.$ All the simulations are conducted at block length of $N = 256$, and the decoding is performed using Successive Cancellation List (SCL) decoding with list size  $L=16$. The decoder implementation is described in Sec.~\ref{sec-scldecoder}.

\begin{remark}
   To generate the reliability sequence and to evaluate the mutual information of the bit-channels required for computing the bound in \eqref{eq:secrecy}, we employ the Tal–Vardy construction method \cite{TAL-VARDY}. Specifically, Algorithms~A~and~C from \cite{TAL-VARDY} are used to construct the good set in polar secrecy codes \cite{Hessam-Vardy}, while Algorithm B from \cite{TAL-VARDY} is used to construct the poor set---defined in \cite{Hessam-Vardy}, with the output alphabet size truncated at $\mu = 64$. As said earlier, the secrecy bounds of Polar/PAC using secrecy coding scheme from Defn.~\ref{defn-secrecycodes1} as shown in Table~\ref{tab:performance}~and~\ref{tab:performance2} can be further improved by choosing a larger truncation value on the output alphabet size (such as, $\mu = 256$).
\end{remark}

\section{Discussion and Open Challenges}
Our finite-length experiments support two high-level observations. First, when polar and PAC codes are used within the same wiretap coset-coding construction, their secrecy guarantees (as evaluated via bit-channel-capacity leakage bounds and converted to semantic secrecy) coincide; the practical gain of PAC codes is improved reliability at Bob under SCL decoding. Second, the IE framework provides semantic-secrecy bounds that are largely independent of Eve's channel parameter $p_e$. Consequently, for a given blocklength, IE bounds can be conservative compared to polar/PAC coset-coding bounds, which tighten significantly as Eve becomes noisier.

Several open directions remain. Tighter finite-length semantic-secrecy bounds for the IE framework on specific channels (e.g., the wiretap BSC) would enable a sharper comparison at small $N$. For coset codes, designing $(\mathcal{A},\mathcal{R})$ to optimize \emph{semantic secrecy at fixed FER} (rather than asymptotic rate) remains largely empirical; efficient design rules beyond Tal--Vardy style constructions would be valuable. Finally, extending the same comparison methodology to non-symmetric or non-degraded wiretap channels is an important step toward broader applicability.

\appendices
\section{Bit-Channel Equivalence of Polar and PAC Codes}\label{app:equiv}
In this section, we begin with a brief overview of Polar and PAC codes, originally introduced by Arıkan in \cite{arikan-polarcodes,arikan2019sequential}, followed by a discussion of Successive Cancellation List (SCL) decoding, as proposed in \cite{tal2015list}. We then present an optimal definition of bit-channels for PAC codes in the context of secrecy coding, and establish our main theoretical result: these bit-channels are information-theoretically equivalent to those of Polar codes.

\subsection{Polar Codes}\label{sec:polarcodes}
Polar codes, introduced by Arıkan in \cite{arikan-polarcodes}, are a class of linear block codes constructed for block lengths $N = 2^n$. Let $\mathcal{X},$ and $\mathcal{Y}$ denote the binary input and output alphabets, respectively. Then, for a binary symmetric channel (BSC) $\langle \mathcal{X}, \mathcal{Y}, W \rangle$, a polar code is defined by the generator matrix
$$G_N = B_N F^{\otimes n},$$
where $B_N$ is the bit-reversal permutation matrix, $F = \begin{bmatrix} 1 & 0 \\ 1 & 1 \end{bmatrix}$ is the polarization kernel, and $\otimes$ denotes the Kronecker product. An information set $A \subseteq \{1, 2, \ldots, N\}$ consists of the $k$ most reliable bit-channels and is used to carry a $k$-bit message $\mathbf{M}$. The complementary set $A^c = [N] \setminus A$ corresponds to frozen bits, typically fixed to zero.

The underpinning idea behind Polar codes is \emph{channel polarization}, wherein $N$ independent copies of the BSC are recursively combined and split into $N$ synthesized bit-channels $W_1, W_2, \ldots, W_N$. As $N \to \infty$, these synthesized channels polarize, approaching either noiseless or completely noisy extremes. The reliability of each $W_i$ can be quantified using the Bhattacharyya parameter $Z(W_i)$ or the mutual information $I(W_i)$, with smaller $Z(W_i)$ (or larger $I(W_i)$) indicating greater reliability for the bit-channel $i$; see \cite{arikan-polarcodes,TAL-VARDY} for the definitions and more details.

\emph{Polar Encoding:} To encode a $k$-bit message $\mathbf{M} \in \{0,1\}^k$, the encoder constructs an $N$-length vector $\mathbf{V}$ by placing $\mathbf{M}$ in the indices of $A$ and assigning zeros to $A^c$. The codeword is then obtained as 
$$
    \mathbf{X} = \mathbf{V}G_N,
$$
which is transmitted over $N$ independent uses of the BSC, yielding the received sequence $\mathbf{Y} \in \mathcal{Y}^N$.

\subsection{Polarization Adjusted Convolutional Codes}

We now review Polarization-Adjusted Convolutional (PAC) codes, introduced by Arıkan in \cite{arikan2019sequential}. These codes are an extended version of polar codes that incorporates a convolutional precoding stage before the polar transform block (see Fig.~\ref{fig:PACschematic}). Below, we briefly describe their encoding, followed by a discussion on decoding, with an emphasis on the SCL decoding method from \cite{tal2015list,rowshan2021polarization}.

\begin{figure}
    \centering
\begin{tikzpicture}
\draw[-,dashed] (-0.5,1) -- (7.6,1) -- (7.6,-1) -- (-0.5,-1) -- (-0.5,1);
\draw[->] (-0.1,0) -- (0.5,0) node[above,midway] {$\mathbf{M}$};
 \node[draw, minimum width=1cm, minimum height=1cm, align = center] at (1.2,0) (main) {Rate \\ Profiling};
 \draw[->] (1.9,0) -- (2.6,0) node[above,midway] {$\mathbf{U}$};
\node[draw, minimum width=1cm, minimum height=1cm, align = center] at (3.1,0) (main) {$T_N^g$};
\draw[->] (3.6,0) -- (4.4,0) node[above,midway] {$\mathbf{V}$};
\node[draw, minimum width=1cm, minimum height=1cm, align = center] at (4.9,0) (main) {$G_N$};
\draw[->] (5.4,0) -- (6,0) node[above,midway] {$\mathbf{X}$};
\node[draw, minimum width=1cm, minimum height=1cm, align = center] at (6.5,0) (main) {$W_1^N$};
\draw[->] (7,0) -- (7.5,0) node[above,midway] {$\mathbf{Y}$};
\end{tikzpicture}
\caption{Schematic of PAC codes}
    \label{fig:PACschematic}
\end{figure}
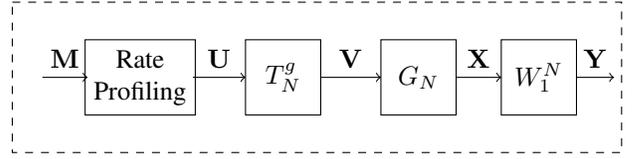

\emph{PAC Encoding:} A $k$-bit message $\mathbf{M}$ is first embedded into an $N$-length vector $\mathbf{U}$ by placing the message bits in the indices of $A$. This vector is processed by a convolutional transform block, followed by the polar transform; see Fig.~\ref{fig:PACschematic}. The resulting codeword is given by
\begin{align*}
\mathbf{X} = \mathbf{U} T_N^g G_N,
\end{align*}
where $G_N$ is the polar generator matrix defined in sec.~\ref{sec:polarcodes}, and $T^g_N$ is an $N \times N$ Toeplitz matrix determined by the generator polynomial
\begin{align*}
    g(D) = 1+ g_1 D + \ldots + g_{m-1} D^{m-1} + g_m D^m,
\end{align*}
with $g_i \in \{0,1\}$ and memory $m > 1$. Further, note that the matrix $T^g_N$ has an upper-triangular structure, i.e., for instance, when $m=3,$ it is defined as follows:
\begin{align*}
   T^g_N =  
\begin{bmatrix}
        1&g_1&g_2&1  &0  &0  &0  &0  &0  &0\\
        0&1  &g_1&g_2&1  &0  &0  &0  &0  &0\\
        0&0  &1  &g_1&g_2&1  &0  &0  &0  &0\\
        0&0  &0  &1  &g_1&g_2&1  &0  &0  &0\\
        0&0  &0  &0  &1  &g_1&g_2&1  &0  &0\\
        0&0  &0  &0  &0  &1  &g_1&g_2&1  &0\\
        0&0  &0  &0  &0  &0  &1  &g_1&g_2&1\\
        0&0  &0  &0  &0  &0  &0  &1  &g_1&g_2\\
        0&0  &0  &0  &0  &0  &0  &0  &1  &g_1\\
        0&0  &0  &0  &0  &0  &0  &0  &0  &1
        \end{bmatrix}.
\end{align*}

\subsection{Successive Cancellation List Decoding}\label{sec-scldecoder}

Note that both Polar and PAC codes can be decoded using the Successive Cancellation List (SCL) decoding proposed in \cite{tal2015list}. Unlike the SC decoder proposed in \cite{arikan-polarcodes}, the SCL decoder reduces the problem of potentially erroneous decisions in SC decoding by maintaining a multi-path decoder that contains a list of the $L$
most likely decoding paths. In particular, the SCL decoder works based on the following three-step approach.

Let $\hat{\mathbf{V}}$ be the estimate of codewords obtained by passing the received information bits $\mathbf{Y}$ through SCL decoding. Then:
\begin{itemize}
    \item \emph{Step-1}: At information node $i \in A,$ the decoder computes the log-likelihood ratio (LLR):
    $$\lambda_0^i = \ln \left(\frac{P(\mathbf{Y},\hat{V}_1^{i-1}|\hat{V}_i=0)}{P(\mathbf{Y},\hat{V}_1^{i-1}|\hat{V}_i=1)}\right).$$
    Next, instead of making a hard decision as in SC decoding, the decoder branches into two candidate paths, corresponding to the possible values of $\hat{V}_i \in \{0,1\}.$ Further, a heuristic function is also defined as
    $$h(\lambda_0^i) = \begin{cases}
    & 0, ~\text{if}~ \lambda_0^i > 0,\\
    & 1, \text{otherwise}.
\end{cases}$$
\item \emph{Step-2:} For each of the possible paths $l$ obtained through the above branching, the path metric (PM) is updated as
\begin{align*}
    PM_i[l] = \begin{cases}
        & PM_{i-1}[l] + \lvert \lambda_0^i\rvert,~ \text{if~}\hat{V}_i[l] \neq h(\lambda_0^i), \\
        & PM_{i-1}[l],~\text{otherwise.}
    \end{cases}
\end{align*}
Note that at each node $i,$ the branching generates $2L$ paths from previous $L$ paths at node $i-1.$

\item \emph{Step-3:} Finally, at every node $i,$ the decoder retains the $L$ most promising paths---the paths with lowest path metric---out of the above possible 2L paths. At the final node $i=N,$ the path with the minimum path metric is selected as the decoded codeword $\hat{\mathbf{V}}.$
\end{itemize}

In case of PAC codes, the estimate of the information sequence $\hat{\mathbf{U}}$ is obtained from the $\hat{\mathbf{V}}$ through the \emph{known} convolution transform. That is, 
\begin{align*}
    \hat{U}_1 &= \hat{V}_1 \\
    \hat{U}_i &= \hat{V}_i + g_1 \hat{U}_{i-1} + \ldots + g_{m-1} \hat{U}_{i-{m-1}} + g_m \hat{U}_{i-m}.
\end{align*}

\subsection{Bit-Channel Definitions}
We now recall the bit-channel definition for Polar codes and provide an optimal bit-channel definition for PAC codes in the context of secrecy coding. Note that, from \cite{arikan-polarcodes}, we have
$I(\mathbf{V};\mathbf{Y}) = I(\mathbf{X};\mathbf{Y}).$
Further, using chain rule of mutual information, we have
\begin{definition}
    \cite{arikan-polarcodes} The $i^{th}$ `polar bit-channel' $W_i$ is defined as a channel with binary input $V_i$ and output $(\mathbf{Y},V_1^{i-1}),$ for $i=1,\ldots,N$.
\end{definition}
Further, we now show that introducing a convolutional precoding block before the Polar transformation does not alter the overall mutual information. That is,
\begin{theorem}\label{thm-mutuinf}
    The mutual informations $I(\mathbf{U};\mathbf{Y}), I(\mathbf{V};\mathbf{Y}),$ and $I(\mathbf{X};\mathbf{Y})$ are all equivalent.
\end{theorem}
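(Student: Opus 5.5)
The plan is to use the fact that $\mathbf{U}\mapsto\mathbf{V}=\mathbf{U}T_N^g$ and $\mathbf{V}\mapsto\mathbf{X}=\mathbf{V}G_N$ are bijections of $\{0,1\}^N$. Mutual information is invariant under a bijective relabeling of one argument. Hence the three quantities coincide.

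First we establish invertibility over $\mathrm{GF}(2)$. The matrix $T_N^g$ is upper triangular with ones on the diagonal, because $g_0=1$ in $g(D)$. Therefore $\det T_N^g=1$ and $T_N^g$ is invertible. The inverse is exactly the recursion $\hat U_i=\hat V_i+\sum_j g_j\hat U_{i-j}$ given in Sec.~\ref{sec-scldecoder}. For $G_N=B_NF^{\otimes n}$, the bit-reversal matrix $B_N$ is a permutation matrix. Moreover $F$ is lower triangular with unit diagonal, so $F^{\otimes n}$ is lower triangular with unit diagonal as well. In fact $F^2=I$ over $\mathrm{GF}(2)$, so $(F^{\otimes n})^{-1}=F^{\otimes n}$. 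Thus $G_N$ is invertible, and so is the product $T_N^gG_N$.

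Second, we use the Markov chains $\mathbf{U}\to\mathbf{V}\to\mathbf{X}\to\mathbf{Y}$. These hold because the encoder is deterministic and the channel acts only on $\mathbf{X}$. The data-processing inequality then gives $I(\mathbf{U};\mathbf{Y})\le I(\mathbf{V};\mathbf{Y})\le I(\mathbf{X};\mathbf{Y})$. Since each map is invertible, $\mathbf{X}$ is a deterministic function of $\mathbf{V}$, and $\mathbf{V}$ is a deterministic function of $\mathbf{U}$. So the reverse chains $\mathbf{X}\to\mathbf{V}\to\mathbf{U}\to\mathbf{Y}$ are also valid, and they give the reverse inequalities. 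Equivalently, one can note that $H(\mathbf{Y}\mid\mathbf{U})=H(\mathbf{Y}\mid \mathbf{U}T_N^gG_N)=H(\mathbf{Y}\mid\mathbf{X})$, since conditioning on a bijective image of a variable is the same as conditioning on the variable. Subtracting this from $H(\mathbf{Y})$ yields the equality directly, and the same argument handles $\mathbf{V}$.

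The only point requiring care is the meaning of "equivalent" when rate profiling is present. If $\mathbf{U}$ has frozen entries, its distribution is supported on a subspace. The bijection argument still applies, because it only uses invertibility of the map, not uniformity of the input. I therefore state the result for an arbitrary input distribution on $\mathbf{U}$, with $\mathbf{V}$ and $\mathbf{X}$ carrying the induced distributions. I expect no real obstacle beyond verifying the unit-diagonal triangularity of $T_N^g$. This requires noting that the constant term of $g(D)$ is $1$; the truncation of the Toeplitz matrix at the boundary does not affect the diagonal.
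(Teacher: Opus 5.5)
Your proof is correct and is essentially the paper's argument. The paper derives the equalities from a lemma stating $I(S;T)=I(f(S);T)$ when $f$ is deterministic and $S\to f(S)\to T$ is Markov, proved by expanding $I(S,f(S);T)$ with the chain rule in two ways; this is the same fact you obtain from two-sided data processing or from $H(\mathbf{Y}\mid\mathbf{U})=H(\mathbf{Y}\mid\mathbf{X})$.

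Your explicit check that $T_N^g$ and $G_N$ are invertible over $\mathrm{GF}(2)$ supplies a detail the paper only assumes. For this particular equality it is not actually needed, since $\mathbf{X}$ being a function of $\mathbf{U}$ together with the forward Markov chain already gives $H(\mathbf{Y}\mid\mathbf{U})=H(\mathbf{Y}\mid\mathbf{U},\mathbf{X})=H(\mathbf{Y}\mid\mathbf{X})$.
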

\begin{proof}
    The proof follows directly from the lemma below.
\end{proof}

\begin{lemma}\label{lemma:lemma1}
    If $S \to f(S) \to T$ is a Markov chain and $f(.)$ is a deterministic, one-to-one function, then $I(S;T) = I(f(S);T).$
\end{lemma}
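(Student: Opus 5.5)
The plan is to prove the two inequalities $I(S;T)\le I(f(S);T)$ and $I(f(S);T)\le I(S;T)$ separately, each by the data-processing inequality, and then combine them.

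\emph{First direction.} Since $S \to f(S) \to T$ is a Markov chain by hypothesis, the data-processing inequality applied to this chain gives $I(S;T) \le I(f(S);T)$.

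\emph{Second direction.} Because $f$ is deterministic, $f(S)$ is a function of $S$. Hence $f(S) \to S \to T$ is automatically a Markov chain: conditioned on $S$, the variable $f(S)$ is fixed, so it is conditionally independent of $T$. The data-processing inequality applied to this chain gives $I(f(S);T) \le I(S;T)$. Combining the two inequalities yields $I(S;T)=I(f(S);T)$.

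The hypotheses are used as follows. Injectivity of $f$ is what makes the Markov hypothesis natural, since $S=f^{-1}(f(S))$ is then recoverable from $f(S)$. In fact one-to-one-ness alone already gives the equality: $H(S\mid T)=H(f(S)\mid T)$ and $H(S)=H(f(S))$, because $S$ and $f(S)$ generate the same $\sigma$-algebra. I may present this entropy argument as a second proof. The Markov hypothesis is then only needed to justify the first direction via data processing.

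There is no real obstacle in this lemma. The only point needing care is to state explicitly which Markov chain is used for each direction. In particular, the reverse chain $f(S)\to S\to T$ follows from $f$ being a deterministic function and does not require any assumption beyond those stated.

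For the intended application in Theorem~\ref{thm-mutuinf}, I would take $S=\mathbf{U}$ with $f(\mathbf{U})=\mathbf{U}T_N^g$. This $f$ is one-to-one because $T_N^g$ is upper-triangular with unit diagonal, hence invertible over $\mathrm{GF}(2)$. I would likewise take $\mathbf{V}\mapsto \mathbf{V}G_N$, which is one-to-one because $G_N$ is invertible. The Markov chains $\mathbf{U}\to\mathbf{V}\to\mathbf{Y}$ and $\mathbf{V}\to\mathbf{X}\to\mathbf{Y}$ hold because the channel $\mathbf{Y}$ depends only on $\mathbf{X}$. Applying the lemma twice then gives $I(\mathbf{U};\mathbf{Y})=I(\mathbf{V};\mathbf{Y})=I(\mathbf{X};\mathbf{Y})$.
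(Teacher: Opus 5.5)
Your proof is correct and is essentially the paper's argument. The paper expands $I(S,f(S);T)$ by the chain rule in two ways and uses $I(S;T\mid f(S))=0$ (from the Markov hypothesis) and $I(f(S);T\mid S)=0$ (because $f$ is deterministic); these are exactly the two facts behind your two data-processing steps, and the paper gets the equality in one line rather than as two inequalities. Your side remark is also right: the paper's argument never uses injectivity, and injectivity alone makes the Markov hypothesis automatic, so either pair of hypotheses suffices.
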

\begin{proof}
    Note that, from the chain rule of mutual information we have,
    \begin{align}\label{eq-mutinf}
        I(S, f(S);T) &= I(S;T) + I(f(S);T|S) \\
        &= I(f(S);T) + I(S;T|f(S)).
    \end{align}
Further, since $S \to f(S) \to T,$ is a Markov chain, we have $I(S;T|f(S)) = 0.$ Additionally, 
$
    I(f(S);T|S) = H(T|S) - H(T|S,f(S)) =0.
$ Substituting these into \eqref{eq-mutinf} completes the proof.
\end{proof}

Next, from Theorem~\ref{thm-mutuinf} and arguments analogous to the Polar case, we can now define the bit-channels of PAC codes as follows:

\begin{definition}
    The $i^{th}$ `PAC bit-channel,' denoted by $\tilde{W}_i,$ is defined as a channel with binary input $U_i$ and output $(\mathbf{Y},U_1^{i-1}),$ for $i=1,\ldots,N$.
\end{definition}
We emphasize that this definition is \emph{optimal}, since it entails no loss of information. More precisely, we have the following:
\begin{theorem}\label{thm2}
    In the notation and setting for PAC codes introduced earlier, we have
    \begin{align*}
        I(U_i|Y_1^N,U_1^{i-1}) = I(V_i|Y_1^N, V_1^{i-1}), ~i=1,2,\ldots,N.
    \end{align*}
\end{theorem}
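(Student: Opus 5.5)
The plan is to exploit the upper-triangular, unit-diagonal structure of the convolutional precoder $T_N^g$. Throughout, I read the quantity $I(U_i|Y_1^N,U_1^{i-1})$ as the capacity of the bit-channel $\tilde W_i$, i.e., $I(U_i;Y_1^N,U_1^{i-1})$, and similarly $I(V_i;Y_1^N,V_1^{i-1})$ for $W_i$. As in the standard bit-channel definitions, I take the inputs $U_1^N$ to be i.i.d.\ uniform. Since $T_N^g$ is invertible over $\mathrm{GF}(2)$ (it is upper triangular with ones on the diagonal), $\mathbf{V}=\mathbf{U}T_N^g$ is then also i.i.d.\ uniform.

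\emph{Step 1: the causal structure of the precoder.} Because column $j$ of $T_N^g$ has nonzero entries only in rows $1,\dots,j$, each $V_j$ depends only on $U_1^j$. Concretely, $V_j = U_j + \phi_j(U_1^{j-1})$ for a linear map $\phi_j$ determined by $g(D)$. Two consequences follow:
\begin{enumerate}
\item $V_1^{i-1} = U_1^{i-1}T_{i-1}$, where $T_{i-1}$ is the leading $(i-1)\times(i-1)$ principal submatrix of $T_N^g$. This submatrix is again unit upper triangular, hence invertible, so $U_1^{i-1}\mapsto V_1^{i-1}$ is a bijection.
\item For each fixed value of $U_1^{i-1}$, the map $U_i\mapsto V_i = U_i+\phi_i(U_1^{i-1})$ is a bijection of $\{0,1\}$.
\end{enumerate}
Consequently the pair $(U_1^{i-1},U_i)$ and the pair $(V_1^{i-1},V_i)$ are related by a bijection that maps the first component to the first component.

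\emph{Step 2: split off the past.} By the chain rule,
\[
I(U_i;Y_1^N,U_1^{i-1}) = I(U_i;U_1^{i-1}) + I(U_i;Y_1^N\mid U_1^{i-1}),
\]
and the first term vanishes by independence of the $U_j$. The same decomposition gives $I(V_i;Y_1^N,V_1^{i-1}) = I(V_i;Y_1^N\mid V_1^{i-1})$, using that the $V_j$ are also independent.

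\emph{Step 3: match the conditional terms.} By item 1, conditioning on $U_1^{i-1}$ is the same as conditioning on $V_1^{i-1}$, since the two generate the same $\sigma$-algebra. Now fix $U_1^{i-1}=u$. By item 2, $V_i$ is a deterministic one-to-one function of $U_i$, and $U_i \to V_i \to Y_1^N$ holds conditionally. This is because $Y_1^N$ depends on $\mathbf{U}$ only through $\mathbf{V}$, and given $u$ and $V_i$ the remaining randomness, namely $U_{i+1}^N$, is independent of $U_i$. Applying Lemma~\ref{lemma:lemma1} under the conditional law gives
\[
I(U_i;Y_1^N\mid U_1^{i-1}=u) = I(V_i;Y_1^N\mid V_1^{i-1}=uT_{i-1}).
\]
Averaging over $u$ then yields the claim.

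\emph{Main obstacle.} The delicate point is the conditional Markov property in Step 3. One must check that, given the past, $Y_1^N$ depends on $U_i$ only through $V_i$; this fails if future bits are not independent of $U_i$. I would handle it by writing $\mathbf{V}_{i+1}^N = U_{i+1}^N T' + \psi(U_1^i)$, where $T'$ is the trailing principal block of $T_N^g$ and $\psi$ is a linear map, and noting that $U_{i+1}^N$ is uniform and independent of $U_1^i$. Because $T'$ is invertible, $\mathbf{V}_{i+1}^N$ is then uniform and independent of $(U_1^{i-1},U_i)$. This makes the law of $Y_1^N$ given $(u,U_i)$ a function of $V_i$ alone, which is exactly the conditional Markov chain needed.
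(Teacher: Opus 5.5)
Your proof is correct, but it is organized differently from the route the paper primarily relies on. The paper obtains this theorem as a corollary of the column-permutation equivalence of $W_i$ and $\tilde W_i$. For each past $v_1^{i-1}$ and the corresponding $u_1^{i-1}$, the submatrix $\tilde W_i(u_1^{i-1})$ is either $W_i(v_1^{i-1})$ or its row exchange, according to the offset bit $\phi_i(u_1^{i-1})$ in your notation. Lemma~\ref{lem:sym}, proved by induction on $n$ from the symmetry of $W$, then turns the row exchange into a column permutation. The paper's alternative proof in Appendix~\ref{alternate-proof} instead applies Lemma~\ref{lemma:lemma1} to whole prefixes, $I(U_1^{j+1};Y_1^N)=I(V_1^{j+1};Y_1^N)$, expands both sides by the chain rule, and cancels the first $j$ terms by induction on $j$.

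Your argument is a direct, non-inductive version of that second proof. You condition on the past and use that $U_1^{i-1}\leftrightarrow V_1^{i-1}$ and, once the past is fixed, $U_i\leftrightarrow V_i$. Equivalently, $I(U_i;Y_1^N\mid U_1^{i-1})=H(Y_1^N\mid U_1^{i-1})-H(Y_1^N\mid U_1^{i})$, with both conditioning prefixes in bijection with their $V$ counterparts; this is what the paper's telescoping amounts to. You also make explicit the unit upper-triangular structure of $T_N^g$, which the alternative proof uses silently when it treats $V_1^{j+1}$ as a one-to-one function of $U_1^{j+1}$ alone.

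Your route buys economy and generality: no property of Eve's channel, in particular no symmetry, is used. The paper's primary route buys a stronger structural conclusion: the two bit-channels are the same channel up to relabeling of outputs. Hence every parameter invariant under such relabeling coincides (Bhattacharyya parameter, bit-channel error probability), not only the mutual information.

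The ``main obstacle'' you flag is not actually an obstacle. With $U_1^{i-1}=u$ fixed, $U_i=V_i+\phi_i(u)$ is a deterministic function of $V_i$. So $I(U_i;Y_1^N\mid V_i,U_1^{i-1}=u)\le H(U_i\mid V_i,U_1^{i-1}=u)=0$, and the conditional Markov chain holds automatically, whatever the law of the future bits. Your argument that $V_{i+1}^N$ is uniform and independent of $U_1^i$ is correct but superfluous. The i.i.d.\ uniform assumption is needed only in Step 2, to discard $I(U_i;U_1^{i-1})$ and $I(V_i;V_1^{i-1})$.
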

\begin{proof}
    This result follows directly from the equivalence of Polar and PAC bit-channels (as defined above), which we rigorously establish in the next section. An alternative proof is provided in Appendix~\ref{alternate-proof}.
\end{proof}

\subsection{Main Result: Equivalence of Polar and PAC Bit Channels}\label{Polar-PAC}

Following the above definitions, we now establish our key theoretical result that \emph{the Polar and PAC bit-channels are equivalent}. Specifically, we show that the channels $W_i: V_i \to (\mathbf{Y}, V_1^{i-1})$ and $\tilde{W}_i: U_i \to (\mathbf{Y}, U_1^{i-1})$ are equivalent in the sense of channel symmetry. To proceed, we begin with the following definition.

\begin{definition}
    A $2\times a$ matrix $A$ is said to be symmetric if there exists a permutation $\pi$ of $\{1,\ldots,a\}$ with $\pi^{-1} = \pi$ and $A(0,i) = A(1,\pi(i))$ for all $i \in \{1,\ldots,a\}$. 
    \label{def:sym}
\end{definition}
It follows that $A(0,\pi(i)) = A(1,\pi(\pi(i))) = A(1,i)$. Hence, whenever $i \neq \pi(i)$, Columns $i$ and $\pi(i)$ are flipped versions of each other, while if $i = \pi(i)$, the two entries in Column $i$ are identical. Consequently, exchanging the two rows of a symmetric matrix produces a column-permuted version of the original matrix.

With this notion in hand, we can now proceed to compare the bit-channels of Polar and PAC codes. Note that both the channels $W_i$ and $\tilde{W}_i$ are of size $ 2 \times |\mathcal{Y}|^N2^{i-1}.$ Let the columns of $W_i$ are indexed by $(y_1^N,v_1^{i-1}),$ with $y_1^N \in \mathcal{Y}^N,$ and $v_1^{i-1} \in \{0,1\}^{i-1}.$ Likewise, the columns of $\tilde{W}_i$ are indexed by $(y_1^N,u_1^{i-1}),$ with $y_1^N \in \mathcal{Y}^N,$ and $u_1^{i-1} \in \{0,1\}^{i-1}.$ Further, for each $i$, let $W_i(v_1^{i-1})$ denote the $2 \times |\mathcal{Y}|^N$ submatrix of $W_i$ obtained by fixing $v_1^{i-1}\in\{0,1\}^{i-1}$ and collecting the columns indexed by $(y_1^N, v_1^{i-1})$ with $y_1^N \in \mathcal{Y}^N$. The corresponding submatrices $\tilde{W}_i(u_1^{i-1})$ for PAC codes are defined in the same way. Then, the following lemma is an important observation.

\begin{lemma}
The submatrix $W_i(v_1^{i-1})$ is symmetric for each $v_1^{i-1}\in\{0,1\}^{i-1}$.
\label{lem:sym}
\end{lemma}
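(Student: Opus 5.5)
We will exhibit the required involution $\pi$ on the column set $\mathcal{Y}^N$ of $W_i(v_1^{i-1})$ explicitly, using the symmetry of the underlying memoryless channel. Here $W$ is a binary-input symmetric channel; the BSC is the case of interest. So there is an involution $\sigma$ of $\mathcal{Y}$ with $W(y|1)=W(\sigma(y)|0)$. For $x\in\{0,1\}$ write $\sigma^x$ for $\sigma$ applied $x$ times, so that $W(y|x\oplus a)=W(\sigma^{a}(y)|x)$. For the BSC, $\sigma$ is just the bit flip $y\mapsto y\oplus 1$.

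First we write the entries of $W_i(v_1^{i-1})$ explicitly. Up to the constant factor $2^{-(N-1)}$, which is irrelevant for symmetry, the entry in row $v_i$ and column $y_1^N$ is
\begin{align*}
W_i(y_1^N, v_1^{i-1}\mid v_i)=\frac{1}{2^{N-1}}\sum_{v_{i+1}^N}\prod_{j=1}^N W\big(y_j\mid (v_1^N G_N)_j\big).
\end{align*}

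Next we define the involution. Let $e_i$ be the $i$-th unit vector and $c=e_iG_N\in\{0,1\}^N$, which is the $i$-th row of $G_N$. Define
\begin{align*}
\pi(y_1^N)=\big(\sigma^{c_1}(y_1),\ldots,\sigma^{c_N}(y_N)\big).
\end{align*}
Since $\sigma$ is an involution, $\pi^{-1}=\pi$.

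We then verify $W_i(v_1^{i-1})(0,y_1^N)=W_i(v_1^{i-1})(1,\pi(y_1^N))$. Flipping $v_i$ from $0$ to $1$, with $v_1^{i-1}$ and $v_{i+1}^N$ fixed, changes the codeword by exactly $c$, because $G_N$ is linear. Hence, for each summand,
\begin{align*}
\prod_j W\big(\sigma^{c_j}(y_j)\mid x_j\oplus c_j\big)=\prod_j W\big(y_j\mid x_j\big),
\end{align*}
where $x=(v_1^{i-1},0,v_{i+1}^N)G_N$. This uses $W(\sigma^{a}(y)\mid x\oplus a)=W(y\mid x)$, which follows from $\sigma$ being an involution. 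Summing over $v_{i+1}^N$ then gives the claim termwise, with the same summation index on both sides. No reindexing of the sum is needed, because the later coordinates $v_{i+1}^N$ are left untouched.

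The main obstacle is modest: we must state precisely the channel-symmetry assumption on $W$ and check that the single involution $\pi$ works uniformly for every fixed $v_1^{i-1}$. It does, because $\pi$ depends only on $i$ through $c$, and not on $v_1^{i-1}$.

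We would also remark that the same argument applies verbatim to $\tilde W_i(u_1^{i-1})$ with $c=e_iT_N^gG_N$. This is what will later allow the two bit-channels to be matched up to a column permutation.
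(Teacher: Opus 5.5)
Your proof is correct, but it takes a different route from the paper.

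The paper argues by induction on $n$. It starts from $W^{(0)}_1=W$ and uses the one-step polar recursion that builds the two level-$(n+1)$ bit-channels from $W^{(n)}_i$. The symmetry permutation is propagated by applying the level-$n$ involution to the first output half only (for the ``minus'' channel) or to both halves (for the ``plus'' channel).

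You bypass the recursion altogether. You write $W_i(v_1^{i-1})$ as an explicit sum over $v_{i+1}^N$ and note that flipping $v_i$, with all other inputs fixed, shifts the codeword by $c=e_iG_N$. Coordinatewise symmetry of $W$ then matches the two rows termwise under $\pi(y_1^N)=(\sigma^{c_1}(y_1),\ldots,\sigma^{c_N}(y_N))$. For the BSC this is just $y_1^N\mapsto y_1^N\oplus c$.

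Your version has two advantages. First, it gives an explicit involution that is visibly independent of $v_1^{i-1}$, so the full matrix $W_i$ is symmetric; this is essentially a special case of Ar{\i}kan's symmetry proposition. Second, it uses only linearity of the transform, not the butterfly structure of $G_N$. That is why it carries over verbatim to $\tilde W_i(u_1^{i-1})$ with $c=e_iT_N^gG_N$. The paper's induction does not give this directly, because PAC bit-channels do not satisfy the same single-step recursion.

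The paper's induction, in turn, mirrors the recursive way bit-channels are evaluated in practice (e.g.\ in the Tal--Vardy construction). The cost is more index bookkeeping, and its symmetry permutations $\pi_{v_1^{i-1}}$ are allowed, a priori, to depend on the past bits.
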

\begin{proof}
See sec.~\ref{proof:lem:sym} for the proof.
\end{proof}
Next, we present our main result.
\begin{theorem}
    The bit channels $W_i$ of the polar code and the bit channel $\tilde{W}_i$ of the PAC code (as defined above) are equivalent by a column permutation of their matrices for every $i$.
\end{theorem}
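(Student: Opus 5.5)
The plan is to compare the two bit-channels submatrix by submatrix. First I relate each block $\tilde{W}_i(u_1^{i-1})$ to a block $W_i(v_1^{i-1})$. Then Lemma~\ref{lem:sym} absorbs a possible row swap into a column permutation. The structural fact used throughout is that $T_N^g$ is upper-triangular with unit diagonal. Hence, for $\mathbf{V}=\mathbf{U}T_N^g$,
\begin{align*}
V_j = U_j + \sum_{l<j} U_l\, T_N^g(l,j), \qquad j=1,\ldots,N.
\end{align*}
Two consequences follow.
\begin{itemize}
\item The map $\phi_i: u_1^{i-1}\mapsto v_1^{i-1}$ is a bijection of $\{0,1\}^{i-1}$, since it is given by the leading $(i-1)\times(i-1)$ block of $T_N^g$, which is unit upper-triangular.
\item We have $v_i = u_i \oplus c_i(u_1^{i-1})$ for a deterministic bit $c_i(u_1^{i-1})=\sum_{l<i}u_l T_N^g(l,i)$.
\end{itemize}

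Next I write out the transition probabilities, using the standard bit-channel convention that all entries of $\mathbf{U}$ (resp.\ $\mathbf{V}$) are i.i.d.\ uniform. For the PAC channel,
\begin{align*}
\tilde{W}_i(y_1^N,u_1^{i-1}\mid u_i) = \sum_{u_{i+1}^N} 2^{-(N-1)}\, W^N\!\left(y_1^N \mid u_1^N T_N^g G_N\right).
\end{align*}
Fixing $u_1^i$, the map $u_{i+1}^N\mapsto v_{i+1}^N$ is affine with a unit upper-triangular linear part, namely the trailing block of $T_N^g$. It is therefore a bijection of $\{0,1\}^{N-i}$, and the sum can be re-indexed over $v_{i+1}^N$. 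This gives
\begin{align*}
\tilde{W}_i(y_1^N,u_1^{i-1}\mid u_i) = W_i\!\left(y_1^N,\phi_i(u_1^{i-1}) \mid u_i\oplus c_i(u_1^{i-1})\right).
\end{align*}
This re-indexing step is the one I expect to need the most care. I must check explicitly that the uniform prior on the remaining PAC inputs transports to the uniform prior on $V_{i+1}^N$ given $V_1^i$, and that the normalizing constants $2^{-(N-1)}$ agree on both sides. Both facts follow from the bijectivity of the affine map, but they are exactly where the Markov chain $\mathbf{U}\to\mathbf{V}\to\mathbf{X}\to\mathbf{Y}$ and Lemma~\ref{lemma:lemma1} enter.

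With this identity, I treat each fixed $u_1^{i-1}$ in two cases.
\begin{itemize}
\item If $c_i(u_1^{i-1})=0$, the submatrix $\tilde{W}_i(u_1^{i-1})$ equals $W_i(\phi_i(u_1^{i-1}))$ column for column.
\item If $c_i(u_1^{i-1})=1$, it equals $W_i(\phi_i(u_1^{i-1}))$ with its two rows exchanged. By Lemma~\ref{lem:sym} that submatrix is symmetric, so by the remark after Definition~\ref{def:sym} the row exchange equals a column permutation $\pi_{\phi_i(u_1^{i-1})}$ acting on the $y_1^N$ indices.
\end{itemize}

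Finally I assemble a global column permutation of the $|\mathcal{Y}|^N 2^{i-1}$ columns. Block $u_1^{i-1}$ of $\tilde{W}_i$ is sent to block $\phi_i(u_1^{i-1})$ of $W_i$, and within the block I apply either the identity or $\pi_{\phi_i(u_1^{i-1})}$. Because $\phi_i$ is a bijection and each within-block map is a bijection, the assembled map is a permutation of all columns. It carries $\tilde{W}_i$ onto $W_i$, which proves the theorem. Equality of mutual informations, and hence Theorem~\ref{thm2} and Theorem~\ref{thm:equiv_polar_pac}, follows because mutual information under uniform input is invariant under column permutations.
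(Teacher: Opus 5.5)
Your proposal is correct and takes the same route as the paper. You re-index the sum over the remaining inputs using the unit upper-triangular structure of $T_N^g$, so that $\tilde{W}_i(u_1^{i-1})$ equals $W_i(v_1^{i-1})$ or its row exchange according to the bit $c_i(u_1^{i-1})$, and then use Lemma~\ref{lem:sym} to turn the row exchange into a column permutation; your explicit assembly of the global permutation and your check of the uniform-prior normalization are somewhat more careful than the paper's write-up. One minor correction: that re-indexing is just a change of summation variable under a bijection, so neither the Markov chain nor Lemma~\ref{lemma:lemma1} is actually needed at that step.
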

\begin{proof}
Note that for a given $v_1^i$, we can determine the corresponding input to the convolutional code $u_1^i(v_1^i)$ by the $1-1$ relationship
$$v_1^i=u_1^i(v_1^i)\,\left[T_N^g\right]_{1:i,1:i},$$
where $\left[T_N^g\right]_{a:b,c:d}$ denotes the submatrix of $T_N^g$ containing the rows $a,a+1,\ldots,b$ and columns $c,c+1,\ldots,d$. To reduce clutter, we denote $u_1^i(v_1^i)$ as $u_1^i,$ and $P(Y=y|X=x)$ as $P(y|x)$ wherever necessary.

For a fixed $v_1^i$ and the corresponding $u_1^i$, there is a $1-1$ relationship between $v_{i+1}^N$ and $u_{i+1}^N$ given by
\begin{align*}
    &[v_1^i\quad v_{i+1}^N] \\ &= [u_1^i\quad u_{i+1}^N]T_N^g \\
    &= \left[u_1^i\left[T_N^g\right]_{1:i,1:i}\quad u_1^i\left[T_N^g\right]_{1:i,i+1:N}+u_{i+1}^N\left[T_N^g\right]_{i+1:N,i+1:N}\right].
\end{align*}

Using the above and $v_1^N=u_1^NT_g$, we get
\begin{align}
    &W_i(Y_1^N=y_1^N,V_1^{i-1}=v_1^{i-1}|V_i=v_i)\nonumber\\
     &=\sum_{v_{i+1}^N\in\{0,1\}^{N-i}}P(y_1^N,v_1^{i-1},v_{i+1}^N|v_i)\nonumber\\
     &=\sum_{v_{i+1}^N\in\{0,1\}^{N-i}}P(y_1^N|v_1^N)P(v_1^{i-1},v_{i+1}^N)\nonumber\\
    &=\sum_{u_{i+1}^N\in\{0,1\}^{N-i}}P(y_1^N|u_1^N)P(u_1^{i-1},u_{i+1}^N)\nonumber\\
    &=\sum_{u_{i+1}^N\in\{0,1\}^{N-i}}P(y_1^N|u_1^{i-1},u_{i+1}^N,u_i)P(u_1^{i-1},u_{i+1}^N)\nonumber\\
    &=\sum_{u_{i+1}^N\in\{0,1\}^{N-i}}P(y_1^N,u_1^{i-1},u_{i+1}^N|u_i)\nonumber\\
    &=P(Y_1^N=y_1^N,U_1^{i-1}=u_1^{i-1}|U_i=u_i) \nonumber \\
    &=\tilde{W}_i(Y_1^N=y_1^N,U_1^{i-1}=u_1^{i-1}|U_i=u_i).
\end{align}
Additionally, we have that
$$v_i=u_i\oplus g_1u_{i-1}\oplus\cdots\oplus g_{m-1}u_{i-(m-1)}\oplus u_{i-m}.$$
So, for a fixed $v_1^{i-1}$ and the corresponding $u_1^{i-1}$, the submatrix $W_i(v_1^{i-1})$ of the polar bit channel has a correspondence with the submatrix $\tilde{W}_i(u_1^{i-1})$ of the PAC bit channel. The correspondence is as follows: $\tilde{W}_i(u_1^{i-1}) = W_i(v_1^{i-1})$ if $g_1u_{i-1}\oplus\cdots\oplus g_{m-1}u_{i-(m-1)}\oplus u_{i-m}=0,$ or $\tilde{W}_i(u_1^{i-1})$ is the row-exchange of $W_i(v_1^{i-1})$ if $g_1u_{i-1}\oplus\cdots\oplus g_{m-1}u_{i-(m-1)}\oplus u_{i-m}=1.$

Since $W_i(v_1^{i-1})$ is symmetric by Lemma \ref{lem:sym}, for every $v_1^{i-1}$ and the corresponding $u_1^{i-1}$, we have that
\begin{equation}
    \tilde{W}_i(u_1^{i-1}) = \text{Column-permute}(W_i(v_1^{i-1})).
\end{equation}
This implies that the matrix of $W_i$ is a column-permuted version of the matrix of $\tilde{W}_i$ and the proof is complete.    
\end{proof}

\subsection{Proof of Lemma~\ref{lem:sym}}\label{proof:lem:sym}
Recall that $W$ is a symmetric channel with input $\{0,1\}$ and output $\mathcal{Y}$. Also, $N=2^n$ and, for $i=1,\ldots,N$, the $i$-th bit channel of the polar code $W_i$ has input $\{0,1\}$ and output $\mathcal{Y}^N\times\{0,1\}^{i-1}$. We will denote $W_i$ as $W^{(n)}_i$ to bring out the dependence on the exponent of blocklength, i.e., $n,$ explicitly.

The claim of the lemma is that $W^{(n)}_i(v_1^{i-1})$ (a submatrix of $W^{(n)}_i$ with outputs $(y_1^N,v_1^{i-1})$ for all $y_1^N\in \mathcal{Y}^N$ and fixed $v_1^{i-1}$) is symmetric (as per Definition \ref{def:sym}) for every $v_1^{i-1}\in\{0,1\}^{i-1}$. The proof will be by induction on $n$. 

The base case $n=0$ results in $W^{(0)}_1=W$, which is symmetric by assumption.

As the induction assumption, we will assume that $W^{(n)}_i(v_1^{i-1})$ is symmetric for every $i$ and $v_1^{i-1}$. We will denote the symmetry permutation of $W^{(n)}_i(v_1^{i-1})$ as $\pi_{v_1^{i-1}}$. So, for $s_1^n\in\mathcal{Y}^N$, we have that
\begin{align}
    W^{(n)}_i(s_1^N,v_1^{i-1}|0)=W^{(n)}_i(\pi_{v_1^N}(s_1^N),v_1^{i-1}|1),\label{eq:sym1}\\
    W^{(n)}_i(s_1^N,v_1^{i-1}|1)=W^{(n)}_i(\pi_{v_1^N}(s_1^N),u_1^{i-1}|0).\label{eq:sym2}    
\end{align}
When clear, we will use the shorthand notation $\overline{s_1^N}\triangleq \pi_{u_1^N}(s_1^N)$. By the definition of the polar code, we have the following recursions for the bit channels from blocklength exponent $n$ to $n+1$.
\begin{align}
    & W^{(n+1)}_{2i-2}(y_1,y_2|v_1) \\
    &\triangleq \frac{1}{2}\left(W_i^{(n)}(y_1|v_1)W_i^{(n)}(y_2|0)+W_i^{(n)}(y_1|v_1\oplus1)W_i^{(n)}(y_2|1)\right),\nonumber\\
    & W^{(n+1)}_{2i-1}(y_1,y_2,v_1|v_2) \\ &\triangleq \frac{1}{2}W_i^{(n)}(y_1|v_1\oplus v_2)W_i^{(n)}(y_2|v_2)\nonumber
\end{align}
for $y_1=(s_1^N,a_1^{i-1})\in\mathcal{Y}^N\times\{0,1\}^{i-1}$, $y_2=(t_1^N,b_1^{i-1})\in\mathcal{Y}^N\times\{0,1\}^{i-1}$, $v_1\in\{0,1\}$, $v_2\in\{0,1\}$. 
So, we have
\begin{equation}
    \begin{split}
        &W^{(n+1)}_{2i-2}(s_1^N,a_1^{i-1},t_1^N,b_1^{i-1}|0)\\
&= \frac{1}{2}\Big(
W_i^{(n)}(s_1^N,a_1^{i-1}|0)\,
W_i^{(n)}(t_1^N,b_1^{i-1}|0) \nonumber\\
&\quad +
W_i^{(n)}(s_1^N,a_1^{i-1}|1)\,
W_i^{(n)}(t_1^N,b_1^{i-1}|1)
\Big)\nonumber\\
&\overset{(a)}{=} \frac{1}{2}\Big(
W_i^{(n)}(\overline{s_1^N},a_1^{i-1}|1)\,
W_i^{(n)}(t_1^N,b_1^{i-1}|0) \nonumber\\
&\quad +
W_i^{(n)}(\overline{s_1^N},a_1^{i-1}|0)\,
W_i^{(n)}(t_1^N,b_1^{i-1}|1)
\Big)\nonumber\\
&= W^{(n+1)}_{2i-2}(\overline{s_1^N},a_1^{i-1},t_1^N,b_1^{i-1}|1).
    \end{split}
\end{equation}
where $(a)$ follows by using \eqref{eq:sym1} and \eqref{eq:sym2}. Similarly, we can show that 
$$W^{(n+1)}_{2i-2}(s_1^N,a_1^{i-1},t_1^N,b_1^{i-1}|1)=W^{(n+1)}_{2i-2}(\overline{s_1^N},a_1^{i-1},t_1^N,b_1^{i-1}|0).$$
This shows that $W^{(n+1)}_{2i-2}(a_1^{i-1},b_1^{i-1})$ is symmetric for every $a_1^{i-1},b_1^{i-1}$.

Proceeding similarly for $W^{(n+1)}_{2i-1}$, we see that
\begin{align}
    & W^{(n+1)}_{2i-1}(s_1^N,a_1^{i-1},t_1^N,b_1^{i-1},v_1|0)\\
    &=\frac{1}{2}W_i^{(n)}(s_1^N,a_1^{i-1}|v_1)W_i^{(n)}(t_1^N,b_1^{i-1}|0)\nonumber\\
    &\overset{(a)}{=} \frac{1}{2}W_i^{(n)}(\overline{s_1^N},a_1^{i-1}|v_1\oplus1)W_i^{(n)}(\overline{t_1^N},b_1^{i-1}|1)\nonumber\\
    &=W^{(n+1)}_{2i-1}(\overline{s_1^N},a_1^{i-1},\overline{t_1^N},b_1^{i-1},v_1|1),\nonumber
\end{align}
where $(a)$ follows by using \eqref{eq:sym1} and \eqref{eq:sym2}. Note that $\overline{s_1^N}=\pi_{a_1^{i-1}}(s_1^N)$ and $\overline{t_1^N}=\pi_{b_1^{i-1}}(t_1^N)$. Similarly, we can show that
\begin{align*}
    &W^{(n+1)}_{2i-1}(s_1^N,a_1^{i-1},t_1^N,b_1^{i-1},v_1|1) \\
    &= W^{(n+1)}_{2i-1}(\overline{s_1^N},a_1^{i-1},\overline{t_1^N},b_1^{i-1},v_1|0).
\end{align*}

This shows that $W^{(n+1)}_{2i-1}(a_1^{i-1},b_1^{i-1},v_1)$ is symmetric for every $a_1^{i-1},b_1^{i-1},v_1$. This concludes the induction step and the proof.                  \qed

\section{Alternative Proof of Theorem~\ref{thm2}}\label{alternate-proof}
Firstly, note that
\begin{align*}
    I(U_i;Y_1^N,X_1^{i-1}) &= I(U_i;U_1^{i-1}) + I(U_i;Y_1^N|X_1^{i-1}) \\
    I(U_i;Y_1^N|U_1^{i-1}),
\end{align*}
because $U_i$ is independent of $U_1^{i-1}$. Similarly, since $V_i$ is independent of $V_1^{i-1},$ we have that
\begin{align*}
    I(V_i;Y_1^N,V_1^{i-1}) = I(V_i;Y_1^N|V_1^{i-1}).
\end{align*}
Consequently, we will equivalently show by induction that
\begin{align}\label{eq:goal}
    I(U_i;Y_1^N|U_1^{i-1}) = I(V_i;Y_1^N|V_1^{i-1}),
\end{align}
for $i= 1,2,\ldots,N.$ The base case is $i=1,$ which is true because $U_1 = V_1.$ As the induction hypothesis, assume that \eqref{eq:goal} is true for $i=1,\ldots,j.$ Using Lemma~\ref{lemma:lemma1} with $S = U_1^{j+1},$ and $T=Y_1^N,$ we get

\begin{align*}
    I(U_1^{j+1};Y_1^N) &= I(V_1^{j+1};Y_1^N) \\
    \sum_{i=1}^{j+1} I(U_i;Y_1^N|U_1^{i-1}) &= \sum_{i=1}^{j+1} I(V_i;Y_1^N|V_1^{i-1}) \\
    I(U_{j+1};Y_1^N|U_1^j) &= I(V_{j+1};Y_1^N|V_1^j),
\end{align*}
where the first step uses the chain rule and the second step uses the induction hypothesis, completing the induction step and proof. \qed
\vspace{-0.1in}
\section{Semantic security of wiretap coset codes}
Consider an $(n,n-k)$ linear binary code with generator matrix $G$. Let $G'$ be a $k\times n$ binary matrix such that the concatenated $n\times n$ matrix $\begin{bmatrix}G\\G'\end{bmatrix}$ has rank $n$. Under wiretap coset coding, a $k$-bit message $m$ is encoded to a codeword $c=mG'+\mathbf{U}G$, where $\mathbf{U}$ is a uniformly random $(n-k)$-bit vector. Let the wiretap channel be a binary-input, symmetric channel $W$ with output alphabet $\mathcal{Y}$. The codeword $c=[c_1,\ldots,c_n]$ is sent over $W$ and an output vector $\mathbf{Z}$ is received. Since $W$ is symmetric, there exists a permutation $\pi$ on $\mathcal{Y}$ such that $W(y|0)=W(\pi(y)|1)$ for every $y\in\mathcal{Y}$ and $\pi^{-1}=\pi$. For bits $a,b$, we have that 
$$W(y|a+b)=W(\pi^b(y)|a),\text{ where }\pi^b(y)=\begin{cases}
    y,&b=0,\\
    \pi(y),&b=1.
\end{cases}$$
Let us denote by $W^*$ the overall channel with input $m=[m_1,\ldots,m_k]$ and output $\mathbf{Z}$, i.e. the channel with input alphabet $\{0,1\}^k$ and output alphabet $\mathcal{Y}^n$ with transition probability
\begin{align}
&W^*(y_1^n|m)=P([y_1,\ldots,y_n]\,\vert\,[m_1,\ldots,m_k])\nonumber\\
&=\sum_{\substack{u\in\{0,1\}^{n-k}\\c=mG'+uG}}\frac{1}{2^{n-k}}W(y_1|c_1)\cdots W(y_n|c_n).
\end{align}
The claim is that $W^*$ is a symmetric channel. This claim implies that $I(\mathbf{M};\mathbf{Z})$ is maximized when $\mathbf{M}$ is uniformly random over $\{0,1\}^k$ and shows the equivalence of strong secrecy and semantic secrecy.

To show $W^*$ is symmetric, consider the partition of $\mathcal{Y}^n$ (the outputs of $W^*$) defined through the following equivalence relation: for $y_1^n,{y'}_1^n\in\mathcal{Y}^n$,
\begin{gather*}
y_1^n\sim {y'}_1^n \text{ if }
[y'_1,\ldots,y'_n]=[\pi^{v_1}(y_1),\ldots,\pi^{v_n}(y_n)] \text{ for }\\
v=[v_1,\ldots,v_n]\in\text{rowspace}(G').
\end{gather*}
Within each partition defined by $\sim$ above, we will show that the rows and columns of the transition matrix are permutations of each other. For $m\in\{0,1\}^k$, $v=mG'$, let us consider the two columns indexed by outputs $[y_1,\ldots,y_n]$ and $[\pi^{v_1}(y_1),\ldots,\pi^{v_n}(y_n)]$ in one partition. For $m'\in\{0,1\}^k$, we see that 
\begin{align*}
&W^*([y_1,\ldots,y_n]|m')\\
&=\sum_{\substack{u\in\{0,1\}^{n-k}\\c=m'G'+uG}}\frac{1}{2^{n-k}}W(y_1|c_1)\cdots W(y_n|c_n)\\
&=\sum_{\substack{u\in\{0,1\}^{n-k}\\c=(m'+m)G'+uG+v\\c'=(m'+m)G'+uG}}\frac{1}{2^{n-k}}W(y_1|c'_1+v_1)\cdots W(y_n|c'_n+v_n)\\
&=\sum_{u\in\{0,1\}^{n-k}}\frac{1}{2^{n-k}}W(\pi^{v_1}(y_1)|c'_1)\cdots W(\pi^{v_n}(y_n)|c'_n)\\
&=W^*([\pi^{v_1}(y_1),\ldots,\pi^{v_n}(y_n)]|m+m').
\end{align*}
As $m'$ varies over $\{0,1\}^k$, the column in the partition indexed by $[y_1,\ldots,y_n]$ is seen to be a permutation of the column indexed by $[\pi^{v_1}(y_1),\ldots,\pi^{v_n}(y_n)]$ in the same partition.

Similarly, considering the two rows indexed by $m$ and $\tilde{m}$, we can show that
$$W^*([y_1,\ldots,y_n]|m)=W^*([\pi^{v_1}(y_1),\ldots,\pi^{v_n}(y_n)]|\tilde{m})$$
for $v=(m'+m+\tilde{m})G'$, $m'\in\{0,1\}^k$. As $m'$ varies over $\{0,1\}^k$, the row indexed by $m$ is seen to be a permutation of the row indexed by $\tilde{m}$ within the partition containing $[y_1,\ldots,y_n]$.

\vspace{-0.1in}
\bibliography{References} 

\begin{thebibliography}{10}

\bibitem{wyner1975wire}
A.~D. Wyner, ``The wire-tap channel,'' {\em Bell system technical journal}, vol.~54, no.~8, pp.~1355--1387, 1975.

\bibitem{Hessam-Vardy}
H.~Mahdavifar and A.~Vardy, ``Achieving the secrecy capacity of wiretap channels using polar codes,'' {\em IEEE Transactions on Information Theory}, vol.~57, no.~10, pp.~6428--6443, 2011.

\bibitem{arikan2019sequential}
E.~Ar{\i}kan, ``From sequential decoding to channel polarization and back again,'' {\em arXiv preprint arXiv:1908.09594}, 2019.

\bibitem{bellare2012polynomial}
M.~Bellare and S.~Tessaro, ``Polynomial-time, semantically-secure encryption achieving the secrecy capacity,'' {\em arXiv preprint arXiv:1201.3160}, 2012.

\bibitem{bellare2012cryptographic}
M.~Bellare, S.~Tessaro, and A.~Vardy, ``A cryptographic treatment of the wiretap channel,'' {\em arXiv preprint arXiv:1201.2205}, 2012.

\bibitem{1055763}
S.~Leung-Yan-Cheong, ``On a special class of wiretap channels (corresp.),'' {\em IEEE Transactions on Information Theory}, vol.~23, no.~5, pp.~625--627, 1977.

\bibitem{TAL-VARDY}
I.~Tal and A.~Vardy, ``How to construct polar codes,'' {\em IEEE Transactions on Information Theory}, vol.~59, no.~10, pp.~6562--6582, 2013.

\bibitem{arikan-polarcodes}
E.~Arikan, ``Channel polarization: A method for constructing capacity-achieving codes for symmetric binary-input memoryless channels,'' {\em IEEE Transactions on Information Theory}, vol.~55, no.~7, pp.~3051--3073, 2009.

\bibitem{tal2015list}
I.~Tal and A.~Vardy, ``List decoding of polar codes,'' {\em IEEE transactions on information theory}, vol.~61, no.~5, pp.~2213--2226, 2015.

\bibitem{rowshan2021polarization}
M.~Rowshan, A.~Burg, and E.~Viterbo, ``Polarization-adjusted convolutional (pac) codes: Sequential decoding vs list decoding,'' {\em IEEE Transactions on Vehicular Technology}, vol.~70, no.~2, pp.~1434--1447, 2021.

\end{thebibliography}
\bibliographystyle{ieeetr}

\end{document}